\documentclass[12pt, draftclsnofoot, onecolumn]{IEEEtran}
\usepackage[latin9]{inputenc}
\synctex=-1
\usepackage{array}
\usepackage{calc}
\usepackage{textcomp}
\usepackage{mathtools}
\usepackage{enumitem}
\usepackage{algorithm2e}
\usepackage{amsmath}
\usepackage{amsthm}
\usepackage{amssymb}
\usepackage{graphicx}
\usepackage{tcolorbox}
\addtolength{\jot}{-0.5em}
\makeatletter


\theoremstyle{plain}

\theoremstyle{plain}

\usepackage{graphicx}
\usepackage{cite}
\usepackage{psfrag}
\usepackage{subfigure}
\usepackage{graphicx}
\usepackage{empheq}
\usepackage{cite}
\usepackage{subfigure}
\usepackage{mathrsfs}
\usepackage[displaymath,mathlines]{lineno}
\usepackage{color}
\usepackage{tabulary}
\usepackage{multirow}
\usepackage{cases}
\usepackage{url}

\usepackage{siunitx}

\usepackage{enumerate}
\usepackage{color}
\usepackage{graphicx}
\usepackage{epsfig}
\usepackage[mathcal]{euscript}

\setlength{\arraycolsep}{0.07 cm}

\newtheorem{lemma}{Lemma}

\newtheorem{proposition}{Proposition}

\DeclareMathOperator{\st}{\text{s.t.}}

\DeclareMathOperator{\argmin}{\mathrm{argmin}}

\newcommand{\Ptotal}{P_{\mathrm{total}}}

\newcommand{\Pp}{\rho_{\mathrm{p}}}

\newcommand{\Pd}{\rho_{\mathrm{d}}}
\newcommand{\Se}{{\mathsf{ S}}_{\mathrm{e}}}

\newcommand{\Ee}{{\mathsf{ E}}_{\mathrm{e}}}
\newcommand{\So}{{\mathsf{ S}}_{\mathrm{o}}}

\newcommand{\tauc}{\tau_\mathrm{c}}
\newcommand{\taup}{\tau_\mathrm{p}}





\def\@setsize#1#2#3#4{
    \@nomath#1
    \let\@currsize#1
    \baselineskip #2
    \baselineskip \baselinestretch\baselineskip
    \parskip \baselinestretch\parskip
    \setbox\strutbox \hbox{
        \vrule height.7\baselineskip
            depth.3\baselineskip
            width\z@}
    \skip\footins \baselinestretch\skip\footins
    \normalbaselineskip\baselineskip#3#4}

\newcommand{\setstretch}[1]{
    \def\baselinestretch{#1}%
    \@currsize
    }

\renewcommand{\baselinestretch}{1.35}

\setcounter{page}{1}

\makeatother

\providecommand{\lemmaname}{Lemma}
\providecommand{\theoremname}{Theorem}

\IEEEoverridecommandlockouts

\begin{document}

\title{\huge Energy Efficiency Maximization in Large-Scale Cell-Free Massive MIMO: A Projected Gradient Approach}
\author{Trang C. Mai, Hien Quoc Ngo, and Le-Nam Tran
	\thanks{T. C. Mai and H. Q. Ngo are with the Institute of Electronics, Communications and Information Technology in Queen's University Belfast, Belfast, U.K. (email: \{trang.mai, hien.ngo\}@qub.ac.uk)}
	\thanks{L. N. Tran is with University College Dublin, Ireland (email: nam.tran@ucd.ie)}
	\thanks{The work of Trang C. Mai and Hien Quoc Ngo was supported by the UK Research and Innovation Future Leaders Fellowships under Grant MR/S017666/1. The work of Le-Nam Tran was supported in part by a Grant from Science Foundation Ireland under Grant number 17/CDA/4786.}
	\thanks{Parts of this work were presented at the 2019 Asiloma Conf. \cite{nam2019first}.}
}
\maketitle

\begin{abstract}

This paper considers the fundamental power allocation problem in cell-free massive mutiple-input and multiple-output (MIMO) systems which aims at maximizing the total energy efficiency (EE) under a sum power constraint at each access point (AP) and a quality-of-service (QoS) constraint at each user. Existing solutions for this optimization problem are based on solving a sequence of second-order cone programs (SOCPs), whose computational complexity scales dramatically with the network size. Therefore, they are not implementable for practical large-scale cell-free massive MIMO systems. To tackle this issue, we propose an iterative power control algorithm based on the frame work of an accelerated projected gradient (APG) method. In particular, each iteration of the proposed method is done by simple closed-form expressions, where a  penalty method is applied to bring constraints into the objective in the form of penalty functions. Finally, the convergence of the proposed algorithm is analytically proved and numerically compared to the known solution based on SOCP. Simulations results demonstrate that our proposed power control algorithm can achieve the same EE as the existing SOCPs-based method, but more importantly, its run time is much lower (one to two orders of magnitude reduction in run time, compared to the SOCPs-based approaches).  

\end{abstract}

\begin{IEEEkeywords}
Cell-free massive MIMO, energy efficiency, accelerated projected gradient.
\end{IEEEkeywords}

\section{Introduction\label{Sec:Introduction}}

Cell-free massive mutiple-input and multiple-output (MIMO) has attracted a lot of research interest recently, for its ability to overcome the inherent intercell-interference of cellular networks \cite{NAYLM:16:WCOM,NAMHR:17:WCOM, IBNFL:19:EURASIP_JWCN, zhang2019cell,d2020analysis,attarifar2019modified}. Basically, cell-free massive MIMO uses a massive number of distributed access points (APs) together with simple linear processing to coherently serve many users using the same time and frequency resources. Each APs can be equipped with several antennas. It relies on the favorable propagation and channel hardening property of massive MIMO technology \cite{MLYN:16:Book}, and the macro-diversity of network MIMO technique  \cite{SZ:01:VTC}. Therefore, it can provide universally good service to all users in the network regardless their locations.


In cell-free massive MIMO, because the APs and users are distributed over a large area, power controls are very important to control the near-far effect, and hence, can significantly improve the system performance as well as to save the radiated powers from the APs in the downlink and the users in the uplink. Thus, many research works on power allocations in cell-free massive MIMO have been studied \cite{ NAYLM:16:WCOM, NAMHR:17:WCOM, bashar2019max, qiu2020downlink, van2020joint, Ngo2018a}. In \cite{NAYLM:16:WCOM, NAMHR:17:WCOM}, power control coefficients at the APs and users were optimally chosen to maximize the minimum spectral efficiency of all users. In \cite{bashar2019max}, the max-min power power control under  limited backhaul was investigated.
The downlink transmission power optimization of cell-free massive MIMO with spatially correlated Rayleigh fading channels  for noncoherent joint  and coherent joint transmission was exploited in \cite{qiu2020downlink}. A joint downlink transmit powers and the number of active APs optimization was proposed and solved in \cite{van2020joint}. In \cite{Ngo2018a}, the total energy efficiency maximization taken into account the hardware and backhaul power consumption was proposed and exploited.
In the context of cell-free massive MIMO, the solutions to power control problems in most, if not all, of previous work are based on successive convex approximation principle, which approximates a non-convex problem by a sequence of convex second order cone programs (SCOPs)\cite{alizadeh2003second}. As a result, these methods have very high computational complexity, as they rely on interior
point methods (through the use of off-the-shelf convex solvers) to solve these convex problems. Therefore, they are not implementable for large-scale cell-free massive MIMO with many APs and users (e.g. in stadium or shopping malls where we may have thousands of APs and active users).

To deal with the large-scale problem in cell-free massive MIMO, \cite{bjornson2020scalable} proposed a scalable framework, which uses AP selection to cope with computational complexity and backhaul requirements. However, the power allocation is quite simple and heuristic, and thus, it may underestimate the capacity of the system. Another approach to deal with the large-scale problem is presented in \cite{buzzi2017cell,buzzi2019user,alonzo2019energy}. The main idea of this approach is to  decompose a large optimization problem into  smaller optimization subproblems to reduce computational complexity. However, this method cannot be applied to solve optimization problems in which the  variables are coupled, such as those with quality of service (QoS) constraints.

In this paper, we consider the energy efficiency maximization problem with QoS constraints with an emphasis on large-scale settings. For such scenarios, the number of power
control coefficients (i.e., the product of
the number of APs and the number of users) which need to be optimized can be extremely large. Thus, our goal is to propose a novel power control algorithm for the energy efficiency maximization problem, which has much lower computational complexity compared to the traditional SOCP-based method. It is apparent from the above discussions that an efficient numerical method for this particular problem is still demanding. To this end, we combine the penalty method and the accelerated projected gradient (APG) method. More specifically, the penalty method is used to handle the QoS constraints in our problem, resulting in more tractable subproblems. Note that the penalty method  is widely used to deal with constrained optimization \cite{AkardiOptII}. In principle, the penalty method penalizes a set of constraints by proper terms and adds the penalty terms into the objective, creating the so-called penalized objective. In this way, an optimal problem with sophisticated constraints can be converted into a regularized optimization problem with simple constraints for which efficient solutions are easier to derive. By increasing the penalty parameter, the solutions to these regularized problems converge to a solution of the original problem.  In this paper, to solve the regularized problems obtained from the penalty method, we then apply the APG method which is a variant of the accelerated proximal gradient method proposed in \cite{Li:2015:APG} for nonconvex programming. 
As shall be numerically shown in Section \ref{sec:numresults}, compared to the sequential SOCP-based method, the proposed method achieves the same total EE but with much lower run time and computational complexity since it is entirely based on first order oracle (i.e the value of the objective and its gradient). Thus, the proposed method can be readily modified to tackle the high complexity of other resource allocation problems for large-scale cell-free massive MIMO. The main contributions of this paper are  as follows.

\begin{itemize}
\item We provide the mathematical background of the APG method with a detailed proof as an alternative solution for sequential SOCP-based method to deal with many resource allocation problems in large-scale cell-free massive MIMO.

\item In our proposed APG method, no external optimization solver is needed as the projection, which is the main operation of the proposed method, is done by closed form expression. As the result, it is much faster to output a solution, compared to the known sequential SOCP-based method. 

\item We customize the presented APG method to solve the total EE maximization problem in cell-free massive MIMO.

\item For our specific problem, we first transform the problem of total EE maximization, subject to transmit power constraints at APs and the individual quality-of-service (QoS) constraints at each user, into a form amenable to the application of the APG method. 

\item We then combined the penalty method and the APG method to achieve a good and low-complexity power control algorithm. In particular, the QoS constraints are penalized by a proper smooth penalty term which in controlled by a penalty parameter. The penalty term is then added to the original objective, giving rise to the penalized problem.  The APG is applied to solve the penalized problem, whereby each iteration admits closed-form expressions. The computational complexity of the proposed algorithm is provided.

\item We provide numerical results to show that the proposed algorithm can achieve the same performance as an SCOPs-based method but with much reduced run time.

\item We also verify that our proposed algorithm converges to a feasible solution regardless the choice of starting point.


\end{itemize}
The rest of this paper is organized as follows. Section II provide the preliminaries of the APG method. Next,
Section III recalls total EE optimization problem in cell-free massive MIMO. Then, Section IV
proposes to use APG method for total EE optimization problem. Section V evaluates the system performance by
using numerical results. Finally, the conclusion is drawn in Section VI.

\textit{Notation:} Standard notations are used in this paper. The superscripts $\ensuremath{(\cdot)^{T}}$ and $\ensuremath{(\cdot)^{H}}$ stand for the transpose and the Hermitian, respectively. Notation $[\mathbf{x}]_{+}$ denotes the  projector onto the positive orthant. Notation   $\nabla$, and $\partial$ denote the gradient, and sub-gradient, respectively. Notation $\text{dom}f$ denotes a domain of function $f$, and $||.||$ denotes the $l_2$-norm. Notation $\equiv$ is used to define an equivalent quantity. Finally, 
we use $\odot$, $\otimes$, and $\langle.,.\rangle$ to denote the Hadamard, the Kronecker products and an inner product, respectively.

\section{Mathematical Preliminaries:  Accelerated Projected Gradient Method}
In this section, we first provide the general framework of an accelerated proximal gradient method and then present a variant of the accelerated proximal gradient method, which is termed the accelerated \textit{projected} gradient (APG) method, to deal with the  EE maximization problem to be considered in Section~\ref{sec:teeo}. First, we recall some definitions. A function $f$ is said to be proper if $\text{dom} f \ne 0$. A function $f$ is said to have an $L$-Lipschitz continuous
	gradient if there exists some $L>0$ such that $||\nabla f(\mathbf{x})-\nabla f(\mathbf{y})||\leq L||\mathbf{x}-\mathbf{y}||,\forall\mathbf{x},\mathbf{y}.$
	If $f(x) \ge x_0, ~\forall x \in X$, then the function is said to be bounded from below by $x_0$. If $f(x) \ge x_0, ~\forall x \in X$, then the function is said to be bounded from below by $x_0$. A function $f$ is lower semicontinuous at point $x_0$ if $\liminf _{x \rightarrow \mathbf{x} 0} f(\mathbf{x}) \geq f\left(\mathbf{x}_{0}\right)$. $f(x)$ is coercive, i.e., $f$ is bounded from below and $f(\mathbf{x}) \rightarrow \infty \quad$ when $\quad\|\mathbf{x}\| \rightarrow \infty$.

We now present a general mathematical framework, called the accelerated proximal gradient method for non-convex problems  presented in \cite{Li:2015:APG},  which concerns the following optimization problem
\begin{equation}\label{P3}
\underset{\mathbf{x}\in\mathbb{R}^{n}}{\min}\ \{T(\mathbf{x})\equiv f(\mathbf{x})+g(\mathbf{x})\},
\end{equation}
where $f(\mathbf{x})$ is $L$-Lipschitz continuous
gradient, $g(\mathbf{x})$ is proper and lower semicontinuous, and $T(\mathbf{x})$ is coercive. Then the accelerated proximal gradient method for solving (\ref{P3}), consists of the following iterations:
\begin{equation}\label{eq:APGm}
	\begin{gathered}
	\mathbf{y}_{k}=\mathbf{x}_{k}+\frac{t_{k-1}}{t_{k}}(\mathbf{z}_{k}-\mathbf{x}_{k})+\frac{t_{k-1}-1}{t_{k}}(\mathbf{x}_{k}-\mathbf{x}_{k-1}),\\
	\mathbf{z}_{k+1}=\text{prox}_{\alpha_{y} g}(\mathbf{y}_{k}-\alpha_{y}\nabla f(\mathbf{y}_{k})),\\
	\mathbf{v}_{k+1}=\text{prox}_{\alpha_{x} g}(\mathbf{x}_{k}-\alpha_{x}\nabla f(\mathbf{x}_{k})),\\
	\mathbf{x}_{k+1}=\begin{cases}
	\mathbf{z}_{k+1} & T(\mathbf{z}_{k+1})\le T(\mathbf{v}_{k+1})\\
	\mathbf{v}_{k+1} & \textrm{otherwise},
	\end{cases}\\
	t_{k+1}=\frac{\sqrt{4t_{k}^{2}+1}+1}{2},
	\end{gathered}
\end{equation}
where $\alpha_{x}$, $\alpha_{y}$ are step sizes, and $\text{prox}_{\alpha g}$ is the proximal operator defined as 
\begin{equation}\label{pro_op1}
\text{prox}_{\alpha g}(\mathbf{x})\triangleq \underset{\mathbf{u}}{\argmin}\ g(\mathbf{u})+\frac{1}{2\alpha}||\mathbf{x}-\mathbf{u}||^{2}.
\end{equation}

In general, accelerated proximal gradient method in (\ref{eq:APGm}) is designed to cope with the \textit{unconstrained optimization} problem (\ref{P3}). However, most of resource allocation problems in cell-free massive MIMO are constrained optimization problems. Therefore, in this paper, we present a special case  of the accelerated proximal gradient method, which is called the APG method, to deal with those problems. Specifically, we consider the following optimization problem
\begin{equation}\label{P4}
\underset{\mathbf{x}\in\mathcal{C}}{\min}\ f(\mathbf{x}),
\end{equation}
where $\mathcal{C}$ is the feasible set of the considered problem, which is often defined by a set of constraints. Again, assume that $f(\mathbf{x})$ is a proper function with Lipschitz continuous
gradient, and bounded from below. We remark that $g(\mathbf{x})$ in \eqref{P3} is not necessarily smooth. Thus, to apply the iterations in (\ref{eq:APGm}) to solve (\ref{P4}), we can let $g(\mathbf{x})$ in (\ref{P3}) be the indicator function of the feasible set $\mathcal{C}$. In this way, the proximal operator in (3) reduces to the Euclidean projection onto $\mathcal{C}$ \cite{Li:2015:APG}. As a result, the APG method for solving \eqref{P4}, consists
of the following iterations:
\begin{tcolorbox}[boxrule=0.5pt,colback = white,sharp corners,before skip=1pt, after skip=4pt,left=0mm,right=0mm,bottom=0pt]
\vspace{-\abovedisplayskip}		
\vspace{-\abovedisplayskip}
\begin{subequations}\label{eq:mAPG}
				\begin{gather}
				\mathbf{y}_{k}=\mathbf{x}_{k}+\frac{t_{k-1}}{t_{k}}(\mathbf{z}_{k}-\mathbf{x}_{k})+\frac{t_{k-1}-1}{t_{k}}(\mathbf{x}_{k}-\mathbf{x}_{k-1})\\
				\mathbf{z}_{k+1}=P_{\mathcal{C}}(\mathbf{y}_{k}-\alpha_{y}\nabla f(\mathbf{y}_{k}))\\
				\mathbf{v}_{k+1}=P_{\mathcal{C}}(\mathbf{x}_{k}-\alpha_{x}\nabla f(\mathbf{x}_{k}))\\
				\mathbf{x}_{k+1}=\begin{cases}
				\mathbf{z}_{k+1} & f(\mathbf{z}_{k+1})\le f(\mathbf{v}_{k+1}),\\
				\mathbf{v}_{k+1} & \textrm{otherwise}
				\end{cases}\\
				t_{k+1}=\frac{\sqrt{4t_{k}^{2}+1}+1}{2},
				\end{gather}
			\end{subequations} %
\end{tcolorbox}
\noindent where  $P_{\mathcal{C}}(\mathbf{x})$ denotes the Euclidean projection of $\mathbf{x}$ onto $\mathcal{C}$, which is defined as
\begin{equation}\label{pro_op2}
P_{\mathcal{C}}(\mathbf{x})\triangleq \underset{\mathbf{u} \in \mathcal{C}}{\argmin}||\mathbf{x}-\mathbf{u}||^{2}.
\end{equation}
\section{Total Energy Efficiency Optimization in Cell-Free Massive MIMO}\label{sec:teeo}
In this section,  we first briefly introduce the system model of cell-free massive MIMO and formulate the total EE optimization problem with conjugate beamforming at the APs, taking into account arbitrary pilot sequence assignments, and imperfect channel estimation. Then, by using penalty functions (PFs), we reformulate the total EE optimization problem into the form so that the APG method can be applied.
\subsection{System Model}
\begin{figure}[h!]
	\centering
	\includegraphics[width=0.65\columnwidth]{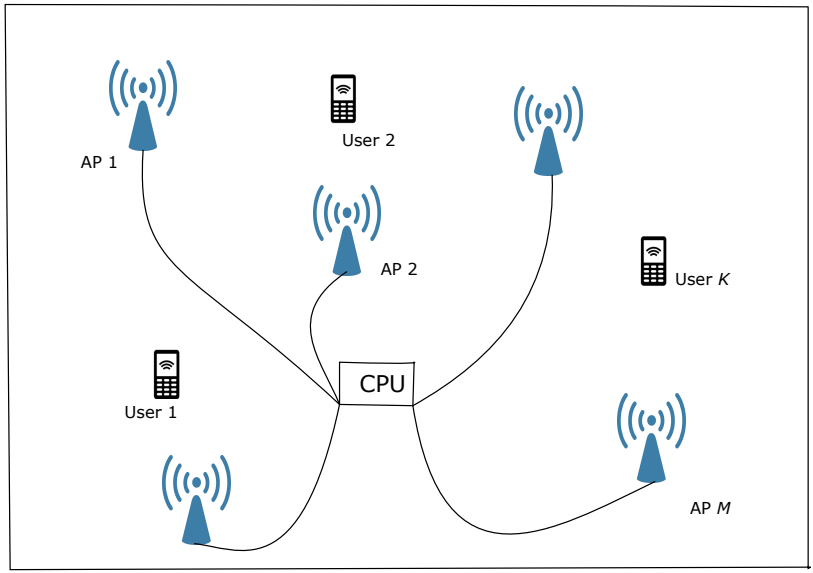}
	
	\caption{System Model.}
	\label{fig:Sysmol}
\end{figure}
\subsubsection{Spectral Eficiency}
We consider a cell-free massive MIMO downlink, which is shown in Fig. \ref{fig:Sysmol}, where $M$ APs coherently serve $K$ users. To take the advantage of channel reciprocity, we consider time division duplex (TDD) operation, where channel information only needs to be estimated in the uplink training phase, and is used in both uplink and downlink data transmission phases. All $M$ APs connect to a central processing unit (CPU) through
a backhaul network. Each user has a single antenna, while each AP is equipped with $N$ antennas. The propagation channel between  AP $m$ and user $k$ is modeled as
 \begin{equation}
 \mathbf{g}_{m k}=\beta_{m k}^{1 / 2} \mathbf{h}_{m k},
 \end{equation}
 where $\beta_{m k}$ is the large-scale fading, and $\mathbf{h}_{m k}$ is the small-scale fading, whose elements are  i.i.d. $\mathcal{CN}(0,1)$ RVs.
The downlink transmission needs two phases: uplink training and downlink payload data transmission phases. In the training phase, all users  send their pilot sequences, $\sqrt{\tau_{\mathrm{p}}} \boldsymbol{\varphi}_{k} \in \mathbb{C}^{\tau_{\mathrm{p}} \times 1},~\forall k$, where $\left\|\boldsymbol{\varphi}_{k}\right\|^{2}=1$, to all APs in the system. Then,  pilot signal received at AP $m$ is
\begin{equation}
\mathbf{Y}_{\mathrm{p}, m}=\sqrt{\tau_{\mathrm{p}} \rho_{\mathrm{p}}} \sum_{k=1}^{K} \mathbf{g}_{m k} \boldsymbol{\varphi}_{k}^{H}+\mathbf{W}_{\mathrm{p}, m},
\end{equation}
where $\rho_{\mathrm{p}}$ is the normalized transmit signal-to-noise ratio
(SNR) of each pilot symbol, $\mathbf{W}_{\mathrm{p}, m}$  is the noise matrix whose elements are i.i.d. $\mathcal{CN}(0,1)$ RVs. After that, each AP uses its received pilot signals from all $K$ users to estimate its local channels using the minimum mean-square error (MMSE) technique \cite{KAY:93:Book}. The channel estimate of ${\mathbf{g}}_{m k}$ is
\begin{equation}
\hat{\mathbf{g}}_{m k}=\frac{\sqrt{\tau_{\mathrm{p}} \rho_{\mathrm{p}}} \beta_{m k}}{\tau_{\mathrm{p}} \rho_{\mathrm{p}} \sum_{k^{\prime}=1}^{K} \beta_{m k^{\prime}}\left|\varphi_{k^{\prime}}^{H} \varphi_{k}\right|^{2}+1}{\check{\mathbf{y}}}_{\mathrm{p}, m k}, 
\end{equation}
where 
\begin{equation}
\begin{aligned}
\check{\mathbf{y}}_{\mathbf{p}, m k}  \triangleq \sqrt{\tau_{\mathrm{p}} \rho_{\mathrm{p}}} \mathbf{g}_{m k}+\sqrt{\tau_{\mathrm{p}} \rho_{\mathrm{p}}} \sum_{k^{\prime} \neq k}^{K} \mathbf{g}_{m k^{\prime}} \boldsymbol{\varphi}_{k^{\prime}}^{H} \boldsymbol{\varphi}_{k}+\mathbf{W}_{\mathrm{p}, m} \varphi_{k}.
\end{aligned}
\end{equation}
In the downlink data transmission phase, APs use conjugate beamforming technique and the channels estimated in the training phase to precode the symbols intended for all users. The vector of transmitted signals from  AP $m$ is
\begin{equation}
\mathbf{x}_{m}=\sqrt{\rho_{\mathrm{d}}} \sum_{k=1}^{K} \sqrt{\eta_{m k}} \hat{\mathbf{g}}_{m k}^{*} q_{k},
\end{equation}
where $q_{k}$ is the symbol intended for user $k$ which satisfies $\mathbb{E}\left\{\left|q_{k}\right|^{2}\right\}=1, ~\forall k$, $\rho_{\mathrm{d}}$ is the normalized transmit power at each AP, and $\eta_{m k}$ is power coefficient between the AP $m$ and the user $k$, satisfying
\begin{equation}
\sum_{k=1}^{K} \eta_{m k} \gamma_{m k} \leq \frac{1}{N},~\forall m,
\end{equation}
where
\begin{align}
\gamma_{mk}  =\frac{{\taup\Pp}\beta_{mk}^{2}}{\taup\Pp\sum_{k'=1}^{K}\beta_{mk'}\left|\pmb{\varphi}_{k'}^{H}\pmb{\varphi}_{k}\right|^{2}+1}.
\end{align}
Then the  signal received at user $k$ is given by
\begin{equation}
r_{k}=\sum_{m=1}^{M} \mathbf{g}_{m k}^{T} \mathbf{x}_{m}+w_{k}.
\end{equation}
By applying the same technique as the one in \cite{Ngo2018a}, the downlink achievable spectral efficiency (SE) of the $k$-th user can be expressed as
\begin{equation}
{\Se}_{k}\left(\{\eta_{mk}\}\right)=\frac{\tauc-\taup}{\tauc}\log_{2}\left(1+\frac{\Pd N^{2}|\bar{\pmb{\gamma}}_{kk}^{T}\bar{\pmb{\eta}}_{k}|^{2}}{\Pd N^{2}\sum\limits _{k'\neq k}^{K}\!|\bar{\pmb{\gamma}}_{k'k}^{T}\bar{\pmb{\eta}}_{k'}|^{2}+\Pd N\sum\limits _{k'=1}^{K}||\pmb{\kappa}_{k'k}\odot\bar{\pmb{\eta}}_{k'}||_{2}^{2}+1}\right)\label{eq:Theo_rateexpr},
\end{equation}
where $\tauc$  is the
length of each coherence interval, $\taup$ is the length of the training phase for each coherence interval, $\pmb{\kappa}_{k'k}\triangleq [\sqrt{\gamma_{1k'}\beta_{1k}};\sqrt{\gamma_{2k'}\beta_{2k}};\ldots;\sqrt{\gamma_{Mk'}\beta_{Mk}}]\in\mathbb{R}_{+}^{M}$,
 $\bar{\pmb{\eta}}_{k}\triangleq[\sqrt{{\eta}_{1k}},\ldots,\sqrt{{\eta}_{Mk}}]^{T}\in\mathbb{R}_{+}^{M}$, and
$\bar{\pmb{\gamma}}_{k'k}\triangleq|\pmb{\varphi}_{k'}^{H}\pmb{\varphi}_{k}|\left[\gamma_{1k'}\frac{\beta_{1k}}{\beta_{1k'}},\gamma_{2k'}\frac{\beta_{2k}}{\beta_{2k'}},\ldots,\gamma_{Mk'}\frac{\beta_{Mk}}{\beta_{Mk'}}\right]^{T}$.  
\subsubsection{Power Consumption Model}
In this paper, the total power consumption is modeled as \cite{Ngo2018a} 
\begin{equation} 
P_{\mathrm {total}}= \sum _{m=1}^{M} P_{m} + \sum _{m=1}^{M} P_{\text {bh},m},
\end{equation}
where $P_{m}$ is the power consumption at the $m$-th AP, and $P_{\text {bh},m}$ is the power consumed by the backhaul link connecting the CPU  and the $m$-th AP. Specifically, $P_{m}$ is modeled as 
\begin{equation} 
P_{m} = \frac {1}{\alpha _{m}} \rho _{\mathrm {d}} N_{0} \left ({N \sum _{k=1}^{K} \eta _{mk}\gamma _{mk}}\right ) + N P_{\text {tc},m}, 
\end{equation}
where $0\le \alpha _{m} \le 1$ is the power amplifier efficiency, $N_{0}$ is the noise power, and $P_{\text {tc},m}$ is the internal power required to run the circuit components at each antenna of the $m$-th AP. Next, $P_{\text {bh},m}$ is modeled as
\begin{equation} 
P_{\text {bh},m} = P_{0,m} +B \cdot {\mathsf { S}}_{\mathrm {e}} \left ({\{\eta _{mk}\}}\right ) \cdot P_{\text {bt},m},
\end{equation}
where $P_{0,m}$ is a fixed power consumption of each backhaul, $P_{\text {bt},m}$ is the traffic-dependent power, and $B$ is the system bandwidth. 
\subsubsection{Total Energy Efficiency\label{Sec:SysModel} }
While spectral efficiency has been a common performance measure for wireless communication design, in this paper we aim to maximize the total energy efficiency of the system, which is defined as how many bits can be transmitted by one Joule. Specifically, the total EE (bit/Joule) can be calculated as 
\begin{align}
\Ee\left(\{\eta_{mk}\}\right)=\frac{{B\sum_{k=1}^{K}{\Se}_{k}(\{\eta_{mk}\})}}{\Ptotal},\label{eq:EE1}
\end{align}

\subsection{Optimization Problem Formulation}
Our problem is to maximize  the total EE \eqref{eq:EE1} by allocating the power coefficients $\{\eta_{mk}\}$, under a sum power constraint at each AP and a QoS constraint, i.e., SE constraint ${\So}_{k}$ at each user. The optimization problem is stated as
\begin{align}
(\mathcal{P}):\left\{ \begin{array}{ll}
\mathop{\max}\limits _{\{\eta_{mk}\}} & {\Ee}(\{\eta_{mk}\})\\
\st%
& {\Se}_{k}(\{\eta_{mk}\}) \ge {\So}_{k}, ~\forall k,\\
& \sum_{k=1}^{K}\eta_{mk}\gamma_{mk}\leq1/N,~\forall m,\\
& \eta_{mk}\geq0,~\forall k,~\forall m,
\end{array}\right.\label{eq:opt1}
\end{align}
In this paper $(\mathcal{P})$ is assumed to be feasible. An equivalent form of problem $(\mathcal{P})$  can be rewritten as 
\begin{align}
(\mathcal{P}_1):\left\{ \begin{array}{ll}
\mathop{\max}\limits _{\{\eta_{mk}\}} & \frac{{B\sum_{k=1}^{K}{\Se}_{k}(\{\eta_{mk}\})}}{\bar{P}_{\mathrm{fix}}+\rho_{\mathrm{d}} N_{0} N \sum_{m=1}^{M} \frac{1}{\alpha_{m}} \sum_{k=1}^{K} \eta_{m k} \gamma_{m k}}\\
\st%
& {\Se}_{k}(\{\eta_{mk}\}) \ge {\So}_{k}, ~\forall k,\\
& \sum_{k=1}^{K}\eta_{mk}\gamma_{mk}\leq1/N,~\forall m,\\
& \eta_{mk}\geq0,~\forall k,~\forall m,
\end{array}\right.\label{eq:opt2}
\end{align}
where $\bar{P}_{\mathrm{fix}} \triangleq \sum_{m=1}^{M}\!\!\left(NP_{\text{tc},m}\!+\!P_{0,m}\right)$. 
Note that the objective function of the problem $(\mathcal{P}_1)$ is nonconvex. The common method to tackle such such a nonconvex problem is to iteratively approximate a nonconvex function by a series of convex functions under the framework of successive convex approximation. In fact, this is the method presented in \cite{Ngo2018a}, in which the EE maximization problem is solved by a sequence of SOCPs. However, as the complexity dramatically increases when the system scales up (i.e. the numbers of APs and users increase), 
such method cannot provide a solution for 
large-scale optimization problems in cell-free massive MIMO systems with thousands of APs and  users. In the next section, we will propose a new algorithm based on the APG method to solve problem $(\mathcal{P}_1)$. Our proposed algorithm has very low complexity, and hence, and can efficiently deal with the systems with many APs and users.

%

\section{Proposed APG Algorithm for the Total Energy Efficiency Optimization}
In this section, we apply the APG method to efficiently solve the total energy efficiency optimization problem in Section \ref{sec:teeo}. We first reformulate the optimization problem $(\mathcal{P}_1)$  by change of variables such that the gradient of the function and the resulting projection can be computed more efficiently. We also apply a penalty method to convert the total EE maximization problem into the form which is amenable to applying the APG method. Note that our algorithm is done over large-scale fading time scale and is performed at the CPU. The details of these steps are described in the following.
\subsection{Problem Reformulation}
%
%
%
%
%
In order to apply the APG method for solving problem ($\mathcal{P}_1$), we need to reformulate ($\mathcal{P}_1$) into the form of (\ref{P4}) and to make sure that the gradient of the objective is Lipschitz continuous. To this end, we first introduce a new variable
${\theta}_{mk}=\sqrt{\eta_{mk}\gamma_{mk}}$ and
define new notations as follows:
\begin{itemize}
	\item $\pmb{\theta}\triangleq(\pmb{\theta}_1;\pmb{\theta}_2;.\ldots;\pmb{\theta}_M)\in\mathbb{R}_{+}^{MK}$, where $\pmb{\theta}_{m}\triangleq[\theta_{m1};\ldots;\theta_{mK}]\in\mathbb{R}_{+}^{K}$, is the vector of all power control coefficients associated with AP
	$m$.
	\item $\mathbf{A}_{k} \triangleq \mathbf{I}_{M}\otimes\mathbf{e}_{k}^{T}$,
	where $\mathbf{e}_{k}\in\mathbb{R}^{K}$ denotes the $k$-th unit
	vector, i.e., the vector such that $e_{k}=1$ and $e_{j}=0,\forall j\neq k$.
	\item $\tilde{\pmb{\gamma}}_{k'k}\triangleq|\pmb{\varphi}_{k'}^{H}\pmb{\varphi}_{k}|\left[\sqrt{\gamma_{1k'}}\frac{\beta_{1k}}{\beta_{1k'}};\negthickspace\sqrt{\gamma_{2k'}}\frac{\beta_{2k}}{\beta_{2k'}};\negthickspace\ldots;\negthickspace\sqrt{\gamma_{Mk'}}\frac{\beta_{Mk}}{\beta_{Mk'}}\right],\notag
	$
	and $\tilde{\pmb{\kappa}}_{k}\triangleq [\sqrt{\beta_{1k}};\sqrt{\beta_{2k}};\ldots;\sqrt{\beta_{Mk}}]\in\mathbb{R}_{+}^{M}$. 
\end{itemize}
Then, ($\mathcal{P}_1$) can be rewritten as 
\begin{align}
(\mathcal{P}_2):\left\{ \begin{array}{ll}
\mathop{\max}\limits _{\pmb{\theta}\in \mathcal{C}} & B\frac{u(\pmb{\theta})}{v(\pmb{\theta})}\triangleq f(\boldsymbol{\theta})\\
\st%
& u_{k}(\pmb{\theta}) \ge {\So}_{k}, ~\forall k,
\end{array}\right.\label{eq:opt3}
\end{align}
where 
\begin{equation}
\mathcal{C}=\{\pmb{\theta}\ |\ ||\pmb{\theta}_{m}||^{2}\leq\frac{1}{N},m=1,2,\ldots,M;\pmb{\theta}\geq 0\},
\end{equation}
\begin{equation}\label{eq:rate:rewrite2}
v(\pmb{\theta})=\bar{P}_{\mathrm{fix}}+\Pd N_{0}N\sum\limits _{m=1}^{M}\frac{1}{\alpha_{m}}||\pmb{\theta}_{m}||^{2},
\end{equation}
\begin{equation}
u(\pmb{\theta})  \triangleq \sum_{k=1}^{K}u_{k}(\pmb{\theta}),
\end{equation}
and
\begin{align}
u_{k}(\pmb{\theta}) & =\frac{\tauc-\taup}{\tauc}\log_2\Biggl(1+\frac{\Pd N^{2}\bigl(\tilde{\pmb{\gamma}}_{kk}^{T}\mathbf{A}_{k}\pmb{\theta}\bigr)^{2}}{\Pd N^{2}\sum\limits _{k'\neq k}^{K}\bigl(\tilde{\pmb{\gamma}}_{k'k}^{T}\mathbf{A}_{k'}\pmb{\theta}\bigr)^{2}+\Pd N\sum\limits _{k'=1}^{K}||\tilde{\pmb{\kappa}}_{k}\odot(\mathbf{A}_{k'}\pmb{\theta})||^{2}+1}\Biggr).\label{eq:rate:rewrite}
\end{align}
As shall be seen shortly, the projection onto $\mathcal{C}$ can be done by closed form expressions. Thus, the main obstacle in deriving an efficient algorithm for solving  $({\mathcal{P}_2})$ is the QoS constraints. To overcome this issue we propose to combine the penalty method and the APG method as described in the next subsection.
\subsection{Proposed Algorithm}
The overall structure of the proposed method is as follows:
\begin{itemize}
    \item The penalty method is invoked to bring the QoS constraints into the objective by some form of a loss function through a penalty parameter, leading to the penalized problem.
    \item The APG method is then applied to solve the penalized problem. This process is repeated until a stopping criterion is achieved.
\end{itemize}

\subsubsection{Penalty Method}  The constraint 
$ u_{k}(\pmb{\theta}) \ge {\So}_{k}$ can be written as
\begin{equation}
	\tilde{\pmb{\gamma}}_{kk}^{T}\mathbf{A}_{k}\pmb{\theta}\geq a_k \sqrt{\Pd N^{2}\sum\limits _{k'\neq k}^{K}\bigl(\tilde{\pmb{\gamma}}_{k'k}^{T}\mathbf{A}_{k'}\pmb{\theta}\bigr)^{2}+\Pd N\sum\nolimits _{k'=1}^{K}||\tilde{\pmb{\kappa}}_{k}\odot(\mathbf{A}_{k'}\pmb{\theta})||^{2}+1}, 
\end{equation}
where $a_k \triangleq\sqrt{\frac{2^{{\So}_{k}\frac{\tauc}{\tauc-\taup}} -1}{\Pd N^{2}}}$. Then, for each QoS constraint we introduce the following quadratic loss function \cite{AkardiOptII}
 \begin{equation}
 \Psi_{k} (\pmb{\theta}) \triangleq \left[\text{max}\left(0,  { g}_{k}(\pmb{\theta})\right )\right]^2,
 \end{equation}
where
 \begin{equation}
 g_k(\pmb{\theta}) \triangleq a_k \sqrt{\Pd N^{2}\sum\nolimits _{k'\neq k}^{K}\bigl(\tilde{\pmb{\gamma}}_{k'k}^{T}\mathbf{A}_{k'}\pmb{\theta}\bigr)^{2}+\Pd N\sum\nolimits _{k'=1}^{K}||\tilde{\pmb{\kappa}}_{k}\odot(\mathbf{A}_{k'}\pmb{\theta})||^{2}+1} -\tilde{\pmb{\gamma}}_{kk}^{T}\mathbf{A}_{k}\pmb{\theta}.
 \end{equation}
Note that  $g_k(\pmb{\theta})$ is convex and $\Psi_{k} (\pmb{\theta})$ is smooth. Then, for a given penalty coefficient $\xi$, the  penalized objective function of   $({\mathcal{P}_2})$, denoted by  $f_{\xi}(\pmb{\theta})$, is given by 
\begin{equation}\label{eq:f_theta}
f_{\xi}(\pmb{\theta})\triangleq B\frac{u(\pmb{\theta})}{v(\pmb{\theta})} -  \xi \sum_k^K\Psi_{k}(\pmb{\theta}).
\end{equation}
We remark that the above regularized objective is formed in the context of maximization. Also note that the value of the penalty coefficient $\xi$ should be selected appropriately. If this parameter is large, the feasibility is guaranteed  but
the resulting optimization problem is numerically ill-conditioned. On the other hand, if it is too small, it may produce a suboptimal solution or even converges to an infeasible point, i.e. the constraints are violated. 

In practice, to avoid the above issues, we can first solve the penalized optimization problem for a small value of $\xi$ and check if the stopping criterion is met. If not, we can increase  $\xi$ by $\rho >1$ times and repeat this process until the stopping criterion is met. In this iterative process, it is critical to use the solution of the previous iteration as the starting point of the next. In essence, the key to  the penalty method  is  to solve the following regularized optimization problem for a given $\xi$
{\color{black}\begin{equation}\label{eq:opt_theta}
\underset{\pmb{\theta}\in \mathcal{C}}{\max}\quad f_{\xi}(\pmb{\theta}),
\end{equation}
which has the same form as (\ref{P4})}. We are now in a position to apply the APG method to solve \eqref{eq:opt_theta} which is detailed next section.

\subsubsection{APG Method}
We first show that  $f_{\xi}(\pmb{\theta})$  is a proper function with Lipschitz continuous gradient and bounded from above, and thus, the APG method is applicable to solve (\ref{eq:opt_theta}).
Towards this end, It is easy to see that the function $f(\pmb{\theta})$ is proper and bounded from above\footnote{Note that for a minimization problem the objective should be bounded from below.}, which is shown by the following inequalities
\begin{align}
f_{\xi}(\pmb{\theta})=B\frac{\tauc-\taup}{\tauc}\frac{u(\pmb{\theta})}{v(\pmb{\theta})} -  \xi \sum_k^K\left[\text{max}\left(0, g_{k}(\pmb{\theta}) \right )\right]^2\le B\frac{\tauc-\taup}{\tauc}\frac{u(\pmb{\theta})}{v(\pmb{\theta})} < \infty. \notag
\end{align}
The above inequality holds since the total EE is bounded from above as its numerator, i.e. total SE, is limited by inter-user interference and total power consumption. The Lipschitz continuity of the gradient of $f_{\xi}(\pmb{\theta})$ is stated in Proposition \ref{pp1}.\begin{proposition}\label{pp1}
	The function $f_{\xi}(\pmb{\theta})$ shown in (\ref{eq:f_theta}) is $L_{f}$-Lipschitz continuous
	gradient with a Lipschitz contant $L_{f}$ given by (\ref{eq_lf1}) in Appendix \ref{sec_lips_proof}.
\end{proposition}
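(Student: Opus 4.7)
The plan is to prove Proposition~\ref{pp1} by decomposing $f_\xi$ into its two constituent pieces and bounding the Lipschitz constant of the gradient of each piece separately on the compact set $\mathcal{C}$. Specifically, I write $f_\xi(\pmb{\theta}) = B\, u(\pmb{\theta})/v(\pmb{\theta}) - \xi \sum_{k=1}^K \Psi_k(\pmb{\theta})$ and use the fact that the Lipschitz constant of a sum is bounded by the sum of individual constants. The key enablers are that $\mathcal{C}$ is bounded (through $\|\pmb{\theta}_m\|^2 \le 1/N$), that $v(\pmb{\theta}) \ge \bar{P}_{\mathrm{fix}} > 0$ uniformly on $\mathcal{C}$, and that every argument of a logarithm or square root appearing in the problem is bounded away from zero by the additive $+1$ terms inside the SINR and inside $g_k$.

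First I would treat the energy-efficiency term $h(\pmb{\theta}) \triangleq B\, u(\pmb{\theta})/v(\pmb{\theta})$. Because $u$ is a sum of $\log_2$ terms whose arguments are rational functions of nonnegative quadratic forms, and because the $+1$ in the SINR denominator prevents any singularity, $u$ is $C^2$ on an open neighborhood of $\mathcal{C}$; similarly, $v$ is a strictly positive quadratic polynomial. Applying the quotient rule, $\nabla h = B(v\nabla u - u\nabla v)/v^{2}$, and differentiating a second time yields an expression built from $\nabla u,\ \nabla v,\ \nabla^2 u,\ \nabla^2 v$ and powers of $1/v$. Each of these quantities is uniformly bounded in operator norm over $\mathcal{C}$ using the norm constraint $\|\pmb{\theta}_m\|^2 \le 1/N$ together with the lower bound $v \ge \bar{P}_{\mathrm{fix}}$, producing an upper bound on $\|\nabla^2 h\|$ that serves as the Lipschitz constant for $\nabla h$.

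Next I would handle the penalty terms $\Psi_k(\pmb{\theta}) = [\max(0, g_k(\pmb{\theta}))]^2$. The crucial observation is that $g_k$ is $C^\infty$ on the whole domain since the expression under its square root is uniformly at least $1$. Consequently $\nabla \Psi_k = 2\max(0,g_k)\,\nabla g_k$ is continuous everywhere, because both one-sided limits agree at $g_k = 0$. To establish Lipschitz continuity of $\nabla \Psi_k$ I would compare pairs of points: on $\{g_k \le 0\}$ the gradient is zero; on $\{g_k > 0\}$ the function coincides with $g_k^2$, whose Hessian $2\nabla g_k \nabla g_k^{T} + 2 g_k \nabla^2 g_k$ is bounded on the compact set $\mathcal{C}$; the mixed case is resolved by inserting an intermediate point on the boundary $\{g_k = 0\}$ and invoking the triangle inequality, which works because $\nabla \Psi_k$ vanishes on one side.

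The main obstacle, and the content of Appendix~\ref{sec_lips_proof}, is producing an \emph{explicit} constant $L_f$ with transparent dependence on the system parameters $(M,K,N,\Pd,\bar{P}_{\mathrm{fix}},\xi,\{\beta_{mk}\},\{\gamma_{mk}\})$ rather than merely invoking compactness. This requires carefully tracking how bounds on the first and second derivatives of $u$, $v$, and each $g_k$ propagate through the quotient rule and the squared hinge loss, and then aggregating everything into a single closed-form expression. The bookkeeping is lengthy because the ratio $u/v$ mixes logarithmic, inverse, and quadratic contributions whose second-derivative norms must all be uniformly controlled using $\|\pmb{\theta}_m\|^2 \le 1/N$ and $v \ge \bar{P}_{\mathrm{fix}}$; this is the step I would expect to consume most of the appendix.
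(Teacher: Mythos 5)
Your proposal is correct in substance and reaches the same goal (an explicit constant obtained by bounding each building block on the bounded feasible set, using $\|\pmb{\theta}_m\|^2\le 1/N$, $v\ge \bar{P}_{\mathrm{fix}}$, and the $+1$ terms that keep denominators and square-root arguments at least $1$), but it takes a genuinely different technical route from the paper. You establish Lipschitz continuity of $\nabla f_\xi$ by computing and uniformly bounding second derivatives: for the ratio $B\,u/v$ you differentiate the quotient twice and bound $\|\nabla^2 h\|$ in operator norm, and for the penalty you split into the regions $\{g_k\le 0\}$ (where $\nabla\Psi_k=0$), $\{g_k>0\}$ (where the Hessian $2\nabla g_k\nabla g_k^T+2g_k\nabla^2 g_k$ is bounded), and the mixed case via an intermediate boundary point — a necessary case analysis, correctly handled, since $\Psi_k$ is only $C^{1,1}$ and not $C^2$ across $\{g_k=0\}$. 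The paper instead never touches second derivatives: it develops a small calculus of Lipschitz functions (sum, product, quotient, composition rules in Appendix A, plus Lemmas 1 and 2 for quadratic forms and Hadamard-product norms), writes each gradient component such as $\nabla u_k = \frac{\nabla n_k}{\ln 2\,(n_k+d_k)} - \frac{n_k\nabla d_k}{\ln 2\,(n_k+d_k)d_k}$ as a product/quotient of bounded Lipschitz pieces, and assembles explicit constants ($L_{n_k}$, $L_{\nabla d_k}=\lambda_{\max}(\mathbf{C}_{d_k})$, the $\zeta$ bounds, and finally $L_{f_\xi}=L_1+L_2+L_3$) mechanically from those rules; in particular the hinge is absorbed by the product rule applied to $2[\max(0,g_k)]\nabla g_k$ with $\max(0,\cdot)$ being $1$-Lipschitz, so no region splitting is needed. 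Your approach is more standard and conceptually direct for the smooth part, at the cost of messier second-derivative computations of the log-of-ratio and the explicit non-smooth case split; the paper's approach trades that for a longer but more uniform bookkeeping that yields closed-form constants directly from first-order quantities. Both arguments, as given, bound the Lipschitz constant only over the bounded set rather than globally, which is the same implicit restriction.
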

\begin{proof}
	See Appendix \ref{sec_lips_proof}.
\end{proof}

It is now obvious that we can apply the APG method in (\ref{eq:mAPG}) to solve (\ref{eq:opt_theta}). Our proposed method that combines the penalty method and the APG method is summarized in Algorithm \ref{alg:APG}, where  $\boldsymbol{\theta}_{\xi}$
denote an optimal solution to (\ref{eq:opt_theta}). 
\begin{algorithm}[!h]
	\hrulefill{}
	\caption{The proposed algorithm for solving \eqref{eq:opt3}.\label{alg:APG}}
	\SetAlgoNoLine
	\SetNlSty{textnormal}{}{}
	
	\KwIn{ $\pmb{\theta}^{(0)}\in\mathbb{R}_{+}^{MK}$,  $0 < \alpha_{\theta},\alpha_{y} < 1/L_{f}$, $\rho>1$, $\delta>0$, $\varsigma > 0$, $\xi$} 
\SetKwInput{KwIn}{Initialization}
	\KwIn{$\pmb{\theta}^{(1)}=\mathbf{z}^{(1)}=\pmb{\theta}^{(0)}$}
	\Repeat(\tcc*[f]{outer loop: penalty method}){convergence}{	
	Set $t^{(1)}=t^{(0)}=1$; $n\leftarrow 1$; $m\leftarrow1$\\
\Repeat(\tcc*[f]{inner loop: APG method}){$\left|\frac{f_{\xi_{m} }(\boldsymbol{\theta}^{(n)})-f_{\xi_{m} }(\boldsymbol{\theta}^{(n-10)})}{f_{\xi_{m} }(\boldsymbol{\theta}^{(n)})}\right|\leq \varsigma$}{
		\begin{subequations}\label{eq:stepsize:project}
		\begin{align}
		\mathbf{y}^{(n)}&=\pmb{\theta}^{(n)}+\frac{t^{(n-1)}}{t^{(n)}}(\mathbf{z}^{(n)}-\pmb{\theta}^{(n)}) +\frac{t^{(n-1)}-1}{t^{(n)}}(\pmb{\theta}^{(n)}-\pmb{\theta}^{(n-1)})\notag\\
		\mathbf{z}^{(n+1)}&=P_{\mathcal{C}}(\mathbf{y}^{(n)}+\alpha_{y}\nabla f_{\xi_{m}}(\mathbf{y}^{(n)})) \label{eq:updatez}\\
		\mathbf{v}^{(n+1)}&=P_{\mathcal{C}}(\pmb{\theta}^{(n)}+\alpha_{\theta}\nabla f_{\xi_{m}}(\pmb{\theta}^{(n)})) \label{eq:updatev}\\
		\pmb{\theta}^{(n+1)}&=\begin{cases}
		\mathbf{z}^{(n+1)}, \quad\text{if} \  f_{\xi_{m} }(\mathbf{z}^{(n+1)})\ge f_{\xi_{m} }(\mathbf{v}^{(n+1)})\\
		\mathbf{v}^{(n+1)}, \quad \textrm{otherwise},
		\end{cases}\notag\\
		t^{(n+1)}&=\frac{\sqrt{4(t^{(n)})^{2}+1}+1}{2}\notag\\
		n&\leftarrow n+1 \nonumber
		\end{align}			
		\end{subequations}
	}
Update the starting point for the next iteration:	$\pmb{\theta}^{(1)}=\mathbf{z}^{(1)}=\pmb{\theta}^{(n)}$\\
Set $\pmb{\theta}_{\xi_{m}}=\pmb{\theta}^{(n)}$\\
Increase the penaly parameter:  $\xi_{m+1} =\xi_{m} \times \rho $	\\
$m\leftarrow m+1$
}

	\hrulefill{} \\ ~	
\end{algorithm}
Regarding the APG procedure in Algorithm \ref{alg:APG}, we note that we have modified (\ref{eq:mAPG}), accounting for the maximization context, where we move along the gradient  to increase the objective of the current point. Note also that, for a practical purpose we stop the APG procedure when the relative increase in the objective during the last 10 iterations is less than $\varsigma$.

It is clear that the key operations in the implementation of Algorithm \ref{alg:APG}  are the computation of the gradient  $\nabla f_{\xi}(\pmb{\theta})$ and the projections in \eqref{eq:updatez} and \eqref{eq:updatev}. In particular,  these two operations can be done in closed-form as shown in Proposition \ref{lem:grad} and Proposition \ref{lem:proj}, respectively.
\begin{proposition}\label{lem:grad}
	$\nabla f_{\xi}(\pmb{\theta})$ can be calculated as
	\begin{align}
	\nabla f_{\xi}(\pmb{\theta})= B\frac{v(\pmb{\theta})\nabla u(\pmb{\theta})-u(\pmb{\theta})\nabla v(\pmb{\theta})}{v(\pmb{\theta})^{2}} - \xi\sum_{k=1}^{K} \nabla \Psi_{k} (\pmb{\theta}),
	\end{align}
	where 
	\begin{equation}
	\nabla u(\pmb{\theta})= \sum_{k=1}^{K} \nabla u_{k}(\pmb{\theta}),
	\end{equation}
	\begin{equation}
	\nabla v(\pmb{\theta})=\Pd N_{0}N\left[\frac{2}{\alpha_{1}}\pmb{\theta}_{1};\frac{2}{\alpha_{2}}\pmb{\theta}_{2};\ldots,\frac{2}{\alpha_{M}}\pmb{\theta}_{M}\right],
	\end{equation}
\begin{equation}
	\nabla \Psi_{k} (\pmb{\theta}) = 2  \left[\max\left(0, { g}_{k}(\pmb{\theta}) \right )\right] \nabla g_{k}(\pmb{\theta}),
	\end{equation}
\begin{equation}\label{eq_gk}
\nabla g_k(\pmb{\theta}) =  {N a_{k}\Pd \sqrt{\bar{\mu}_{k}}\Bigl(\sum_{k'\ne k}^{K}\mathbf{A}_{k'}^{T}\bigl(\tilde{\pmb{\gamma}}_{k'k}\tilde{\pmb{\gamma}}_{k'k}^{T}+\frac{1}{N}\mathbf{B}_{k}\bigr)\mathbf{A}_{k'}+\frac{1}{N}\mathbf{A}_{k}^{T}\mathbf{B}_{k}\mathbf{A}_{k}\Bigr)\pmb{\theta}} -\mathbf{A}_{k}^{T}\tilde{\pmb{\gamma}}_{kk},
\end{equation}
	and
			\begin{align}\label{eq:grad:SE}
			\nabla u_k(\pmb{\theta}) & =\frac{\Pd\mu_{k}(\tauc-\taup)}{\tauc\ln{2}}\Bigr(\mathbf{A}_{k}^{T}\tilde{\pmb{\gamma}}_{kk}\tilde{\pmb{\gamma}}_{kk}^{T}\mathbf{A}_{k}-\bar{\mu}_{k}\Pd \bigl(\tilde{\pmb{\gamma}}_{kk}^{T}\mathbf{A}_{k}\pmb{\theta}\bigr)^{2}\notag\\
			&\quad\times\bigl(\sum\nolimits_{k'\neq k}^{K}\mathbf{A}_{k'}^{T}\bigl(\tilde{\pmb{\gamma}}_{k'k}\tilde{\pmb{\gamma}}_{k'k}^{T}+\frac{1}{N}\mathbf{B}_{k}\bigr)\mathbf{A}_{k'}+\frac{1}{N}\mathbf{A}_{k}^{T}\mathbf{B}_{k}\mathbf{A}_{k}\bigr) \Bigr)\pmb{\theta},
			\end{align}
where
			\begin{align}
			\mu_{k} & \triangleq \frac{2}{\Pd\sum_{k'=1}^{K}\bigl(\tilde{\pmb{\gamma}}_{k'k}^{T}\mathbf{A}_{k'}\pmb{\theta}\bigr)^{2}+\frac{\Pd}{N}\sum_{k'=1}^{K}||\tilde{\pmb{\kappa}}_{k}\odot(\mathbf{A}_{k'}\pmb{\theta})||^{2}+\frac{1}{N^{2}}},\notag\\
			\bar{\mu}_{k} & \triangleq \frac{1}{\Pd\sum_{k'=1,k'\neq k}^{K}\bigl(\tilde{\pmb{\gamma}}_{k'k}^{T}\mathbf{A}_{k'}\pmb{\theta}\bigr)^{2}+\frac{\Pd}{N}\sum_{k'=1}^{K}||\tilde{\pmb{\kappa}}_{k}\odot(\mathbf{A}_{k'}\pmb{\theta})||^{2}+\frac{1}{N^{2}}}.\notag
			\end{align}

\end{proposition}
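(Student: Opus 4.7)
The plan is to obtain $\nabla f_{\xi}$ by straightforward application of standard differentiation rules, since $f_{\xi}$ decomposes into a smooth fractional term $Bu/v$ plus a sum of smooth penalty terms $-\xi\Psi_k$. By linearity of the gradient I would first write $\nabla f_{\xi}(\pmb{\theta}) = B\nabla(u/v)(\pmb{\theta}) - \xi\sum_k \nabla\Psi_k(\pmb{\theta})$ and then handle each piece independently.

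For the fractional term, the quotient rule immediately yields $B(v\nabla u - u\nabla v)/v^{2}$. The gradient $\nabla v$ is read off directly from \eqref{eq:rate:rewrite2}: since $v$ is affine in $\|\pmb{\theta}_m\|^2$, differentiating block-by-block produces the concatenation $\rho_{\mathrm d}N_0 N[(2/\alpha_1)\pmb{\theta}_1;\ldots;(2/\alpha_M)\pmb{\theta}_M]$. The decomposition $\nabla u = \sum_k \nabla u_k$ then reduces the task to computing $\nabla u_k$ for each user.

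For $\nabla u_k$, I would rewrite $u_k$ as a difference of two logarithms, $\log(\mathrm{num}+\mathrm{den})-\log(\mathrm{den})$, where both the numerator and denominator of the SINR are quadratic forms in $\pmb{\theta}$. Using the identity $\|\tilde{\pmb{\kappa}}_k\odot(\mathbf{A}_{k'}\pmb{\theta})\|^{2} = \pmb{\theta}^{T}\mathbf{A}_{k'}^{T}\mathbf{B}_k\mathbf{A}_{k'}\pmb{\theta}$ with $\mathbf{B}_k \triangleq \mathrm{diag}(\tilde{\pmb{\kappa}}_k\odot\tilde{\pmb{\kappa}}_k)$, every term becomes of the form $\pmb{\theta}^{T}\mathbf{M}\pmb{\theta}$ for a symmetric $\mathbf{M}$ whose gradient is $2\mathbf{M}\pmb{\theta}$. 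Applying the chain rule for $\log(\cdot)$ and collecting terms produces the compact expression \eqref{eq:grad:SE} with the auxiliary scalars $\mu_k$ (arising from the derivative of $\log(\mathrm{num}+\mathrm{den})$) and $\bar{\mu}_k$ (arising from the derivative of $\log(\mathrm{den})$); the factor $(\tauc-\taup)/(\tauc\ln 2)$ comes from the log base change and the SE prelog.

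For the penalty term, the outer squaring in $\Psi_k = [\max(0,g_k)]^{2}$ smooths out the non-differentiability of $\max(0,\cdot)$: if $g_k(\pmb{\theta}) \le 0$, then $\Psi_k$ vanishes on a neighbourhood and $\nabla\Psi_k = 0$, while if $g_k(\pmb{\theta}) > 0$ the chain rule gives $\nabla\Psi_k = 2 g_k(\pmb{\theta})\,\nabla g_k(\pmb{\theta})$; both cases are captured by the formula $2[\max(0,g_k)]\nabla g_k$. Finally, $\nabla g_k$ follows from the chain rule applied to the square root of the quadratic $h_k(\pmb{\theta}) \triangleq \rho_{\mathrm d}N^2\sum_{k'\neq k}(\tilde{\pmb{\gamma}}_{k'k}^{T}\mathbf{A}_{k'}\pmb{\theta})^{2} + \rho_{\mathrm d}N\sum_{k'}\|\tilde{\pmb{\kappa}}_k\odot(\mathbf{A}_{k'}\pmb{\theta})\|^{2}+1$, giving $a_k\nabla h_k/(2\sqrt{h_k})$ minus the linear part $\mathbf{A}_k^{T}\tilde{\pmb{\gamma}}_{kk}$; identifying $1/(2\sqrt{h_k}) = \sqrt{\bar{\mu}_k}/2$ at the point $\pmb{\theta}$ and computing $\nabla h_k$ block-wise yields \eqref{eq_gk}. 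The main obstacle is not any single derivative but the bookkeeping: keeping the quadratic-form representation consistent across numerator, denominator, and penalty so that the scalars $\mu_k,\bar{\mu}_k$ and the matrix $\mathbf{B}_k$ appear in the stated form.
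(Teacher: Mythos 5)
Your proposal follows essentially the same route as the paper: quotient rule for $Bu/v$, the quadratic-form identities $\nabla(\tilde{\pmb{\gamma}}_{k'k}^{T}\mathbf{A}_{k'}\pmb{\theta})^{2}=2\mathbf{A}_{k'}^{T}\tilde{\pmb{\gamma}}_{k'k}\tilde{\pmb{\gamma}}_{k'k}^{T}\mathbf{A}_{k'}\pmb{\theta}$ and $\nabla\|\tilde{\pmb{\kappa}}_{k}\odot(\mathbf{A}_{k'}\pmb{\theta})\|^{2}=2\mathbf{A}_{k'}^{T}\mathbf{B}_{k}\mathbf{A}_{k'}\pmb{\theta}$, the two-case treatment of $[\max(0,g_k)]^2$, and the chain rule for $\sqrt{h_k}$; your difference-of-logs form of $u_k$ is algebraically identical to the paper's chain-plus-quotient computation. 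The one slip is the identification $1/(2\sqrt{h_k})=\sqrt{\bar{\mu}_k}/2$: since $h_k=N^{2}/\bar{\mu}_k$, the correct relation is $1/(2\sqrt{h_k})=\sqrt{\bar{\mu}_k}/(2N)$, and it is this extra $1/N$ that turns the prefactor of $\nabla h_k$ into the stated $N a_k\Pd\sqrt{\bar{\mu}_k}$ rather than $N^{2}a_k\Pd\sqrt{\bar{\mu}_k}$; with that normalization fixed, your derivation reproduces \eqref{eq_gk} exactly.
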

\begin{IEEEproof}
See Appendix \ref{sec:grad_proof}.
\end{IEEEproof}
%
\begin{proposition}\label{lem:proj}
	The projection $P_{\mathcal{C}}(\mathbf{u})$ admits the following analytical solution
	\begin{equation}\label{eq:anal_sol}
	\pmb{\theta}_{m}=\frac{\sqrt{1/N}}{\max(||\bigl[\mathbf{u}_{m}\bigr]_{+}||,\sqrt{1/N})}\bigl[\mathbf{u}_{m}\bigr]_{+},\quad \forall m=1,2,\ldots,M.
	\end{equation}
\end{proposition}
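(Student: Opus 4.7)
\medskip
\noindent\textbf{Proof plan.} The set $\mathcal{C}$ is a Cartesian product of $M$ identical sets, one for each block $\pmb{\theta}_m$: each block must lie in the intersection of the positive orthant $\mathbb{R}_+^K$ and the Euclidean ball of radius $\sqrt{1/N}$. The plan is therefore to first observe that the projection separates across $m$, so it suffices to solve, for a fixed $m$, the block subproblem
\begin{equation*}
\min_{\pmb{\theta}_m}\ \tfrac{1}{2}\|\pmb{\theta}_m-\mathbf{u}_m\|^{2}\quad\text{s.t.}\quad \pmb{\theta}_m\geq \mathbf{0},\ \|\pmb{\theta}_m\|^{2}\leq 1/N.
\end{equation*}
This is a strictly convex program with a nonempty compact feasible set, so it has a unique minimizer fully characterized by the KKT conditions.

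Next, I would write the KKT system with a multiplier vector $\pmb{\mu}\geq \mathbf{0}$ for the nonnegativity constraints and a scalar multiplier $\lambda\geq 0$ for the norm constraint. Stationarity gives $(1+2\lambda)\pmb{\theta}_m=\mathbf{u}_m+\pmb{\mu}$, together with the complementarity relations $\pmb{\mu}\odot\pmb{\theta}_m=\mathbf{0}$ and $\lambda\bigl(\|\pmb{\theta}_m\|^{2}-1/N\bigr)=0$. The key structural observation is that, because $1+2\lambda>0$, the complementarity $\pmb{\mu}\odot\pmb{\theta}_m=\mathbf{0}$ combined with $\pmb{\mu},\pmb{\theta}_m\geq \mathbf{0}$ forces $\pmb{\theta}_m$ to be a nonnegative rescaling of $[\mathbf{u}_m]_+$, namely $\pmb{\theta}_m=\tfrac{1}{1+2\lambda}[\mathbf{u}_m]_+$. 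After establishing this, I would split into two cases.

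In the inactive case $\lambda=0$, the candidate is $\pmb{\theta}_m=[\mathbf{u}_m]_+$; this is KKT-optimal precisely when $\|[\mathbf{u}_m]_+\|\leq\sqrt{1/N}$, in which case the prefactor in \eqref{eq:anal_sol} reduces to $1$. In the active case $\lambda>0$, the norm constraint must hold with equality, $\|\pmb{\theta}_m\|=\sqrt{1/N}$, which pins down $\tfrac{1}{1+2\lambda}=\sqrt{1/N}/\|[\mathbf{u}_m]_+\|$ (this requires $\|[\mathbf{u}_m]_+\|>\sqrt{1/N}$ to ensure $\lambda>0$). Substituting back reproduces \eqref{eq:anal_sol} in this regime. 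The unified $\max(\cdot,\sqrt{1/N})$ form in \eqref{eq:anal_sol} simply merges the two cases into a single expression, and uniqueness of the KKT solution completes the proof.

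The main obstacle, in my view, is the justification that the optimal $\pmb{\theta}_m$ has exactly the same nonzero support as $[\mathbf{u}_m]_+$ and no additional entries activated by the ball constraint; this is not entirely obvious a priori because the ball constraint couples all coordinates, but it follows cleanly from the fact that the stationarity equation gives $\pmb{\theta}_m=\tfrac{1}{1+2\lambda}(\mathbf{u}_m+\pmb{\mu})$ with a strictly positive scalar $\tfrac{1}{1+2\lambda}$, so coordinate-wise the classical orthant-projection argument applies unchanged. Once this is noted, everything else is routine algebra.
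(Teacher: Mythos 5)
Your proof is correct. The first step --- observing that $\|\pmb{\theta}-\mathbf{u}\|^{2}$ and the constraint set both separate across the blocks $\pmb{\theta}_{m}$, so the projection reduces to $M$ independent problems of projecting onto the intersection of the nonnegative orthant and a Euclidean ball of radius $\sqrt{1/N}$ --- is exactly the paper's argument. Where you diverge is in how that block subproblem is then dispatched: the paper simply cites Theorem~7.1 of Bauschke, Bui, and Wang (the reference \cite{Bauschke:2018:Proj} on projecting onto the intersection of a cone and a sphere), whereas you give a self-contained KKT derivation. Your derivation is sound: strict convexity plus a compact convex feasible set satisfying Slater's condition makes the KKT system necessary and sufficient with a unique solution; stationarity $(1+2\lambda)\pmb{\theta}_{m}=\mathbf{u}_{m}+\pmb{\mu}$ together with $\pmb{\mu}\odot\pmb{\theta}_{m}=\mathbf{0}$, $\pmb{\mu}\geq\mathbf{0}$, $\pmb{\theta}_{m}\geq\mathbf{0}$ and $1+2\lambda>0$ does force $\pmb{\theta}_{m}=\tfrac{1}{1+2\lambda}[\mathbf{u}_{m}]_{+}$ coordinate-wise (the entry is positive only where $u_{mi}>0$, and zero with $\mu_{i}=-u_{mi}\geq 0$ otherwise), and the two cases $\lambda=0$ and $\lambda>0$ merge into the single $\max(\cdot,\sqrt{1/N})$ formula as you say. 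The "main obstacle" you flag --- that the support of the optimum matches that of $[\mathbf{u}_{m}]_{+}$ despite the coupling introduced by the ball constraint --- is indeed the only point requiring care, and your resolution of it is correct. In short, your argument is more elementary and self-contained; the paper's is shorter by outsourcing the block subproblem to the literature. Either is acceptable.
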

\begin{IEEEproof}
	See Appendix \ref{sec_proj}.
\end{IEEEproof}

\subsection{Proposed Algorithm with Line Search}
In Algorithm~\ref{alg:APG} is guaranteed to converge for any fixed  step sizes smaller than $L_{f}$. However, it is possible that  $L_{f}$ given in \eqref{eq_lf1} is  significantly larger than the best Lipschitz constant of the gradient of $f_{\xi}(\pmb{\theta})$ which is practically difficult to find. In order to find a larger step size, and thus faster convergence, we can carry out a line search to tune the step size in \eqref{eq:updatez} and \eqref{eq:updatev}. In this paper, we can perform a line search as described in Algorithm
\ref{alg:Backtracking-linear-search}, inspired from  \cite{Li:2015:APG}, which works as follows. In each iteration, the backtracking line search  starts with a large step size, and then decrease
it until a better feasible solution is found. 
 As we can see from Algorithm \ref{alg:Backtracking-linear-search}, the algorithm will always terminate with a better point, in the sense of maximizing the  objective function $f_{\xi}(\pmb{\theta})$. Note that the backtracking line search in Algorithm \ref{alg:Backtracking-linear-search}
follows the Barzilai-Borwein (BB) rule \cite{barzilai1988two}. As $\nabla f_{\xi}(\pmb{\theta})$
is Lipschitz continuous with a Lipschitz constant $L_{f}$ given in \eqref{eq_lf1}, the
line search procedure is guaranteed to terminate after finite steps.

\begin{algorithm}[!t]
\hrulefill{}
\caption{Backtracking line search for finding a step size for \eqref{eq:updatez} and \eqref{eq:updatev}\label{alg:Backtracking-linear-search}}

\SetAlgoNoLine
\SetNlSty{textnormal}{}{}

\KwIn{ $\nu<1$, $\delta>0$} 

		$\mathbf{s}^{(n)}=\mathbf{z}^{(n)}-\mathbf{y}^{(n-1)}$; $\mathbf{r}^{(n)}=\nabla f_{\xi}(\mathbf{z}^{(n)})-\nabla f_{\xi}(\mathbf{y}^{(n-1)})$

Set $\alpha_{y}=\frac{(\mathbf{s}^{(n)})^{T}\mathbf{s}^{(n)}}{(\mathbf{s}^{(n)})^{T}\mathbf{r}^{(n)}}$
or $\alpha_{y}=\frac{(\mathbf{s}^{(n)})^{T}\mathbf{r}^{(n)}}{(\mathbf{r}^{(n)})^{T}\mathbf{r}^{(n)}}$,

$\mathbf{s}^{(n)}=\mathbf{v}^{(n)}-\pmb{\theta}^{(n-1)}$;
$\mathbf{r}^{(n)}=\nabla f_{\xi}(\mathbf{v}^{(n)})-\nabla f_{\xi}(\pmb{\theta}^{(n-1)})$,

Set $\alpha_{\theta}=\frac{(\mathbf{s}^{(n)})^{T}\mathbf{s}^{(n)}}{(\mathbf{s}^{(n)})^{T}\mathbf{r}^{(n)}}$
or $\alpha_{\theta}=\frac{(\mathbf{s}^{(n)})^{T}\mathbf{r}^{(n)}}{(\mathbf{r}^{(n)})^{T}\mathbf{r}^{(n)}}$,


\Repeat(\tcc*[f]{step size for \eqref{eq:updatez}}){$f_{\xi}(\mathbf{z}^{(n+1)})\geq f_{\xi}(\mathbf{y}^{(n)})+\delta||\mathbf{z}^{(n+1)}-\mathbf{y}^{(n)}||^{2}$}{
	
	$\mathbf{z}^{(n+1)}=P_{\mathcal{C}}(\mathbf{y}^{(n)}+\alpha_{y}\nabla f_{\xi}(\mathbf{y}^{(n)}))$,
	
	$\alpha_{y}=\alpha_{y}\nu$,
	
}
\Repeat(\tcc*[f]{step size for \eqref{eq:updatev}}){$f_{\xi}(\mathbf{v}^{(n+1)})\geq f_{\xi}(\pmb{\theta}^{(n)})+\delta||\mathbf{v}^{(n+1)}-\pmb{\theta}^{(n)}||^{2}$}{
	
	$\mathbf{v}^{(n+1)}=P_{\mathcal{C}}(\pmb{\theta}^{(n)}+\alpha_{\theta}\nabla f_{\xi}(\pmb{\theta}^{(n)}))$,\label{theta:update}
	
	$\alpha_{\theta}=\alpha_{\theta}\nu$,
	
}
\hrulefill{}\\ ~
\end{algorithm}

\subsection{Convergence Analysis of Proposed Method}
The convergence of Algorithm \ref{alg:APG} is guaranteed that of the APG method and the penalty method. Specifically,
 for a given $\xi$, similar to  Theorem 1 of \cite{Li:2015:APG}, we can show that the 
 objective sequence $\{f_{\xi }(\boldsymbol{\theta}^{(n)})\} $ is monotonically increasing.
Also, the sequence $\{\boldsymbol{\theta}^{(n)}\}$ is bounded and thus has accumulation points. Each accumulation point is also a stationary solution to \eqref{eq:opt_theta}.
Furthermore, following the arguments in \cite[Chap. 10]{AkardiOptII} we can show that the iterate sequence $\{\boldsymbol{\theta}_{\xi_{m}}\}$ converges (in the subsequence sense) to a feasible point of $(\mathcal{P}_2)$ when $\xi_{m} \to \infty$. Thus the obtained solution of Algorithm \ref{alg:APG} is also a stationary point of $(\mathcal{P}_2).$
%
The proof of these claims is given in Appendix \ref{Appe1}. We note however that since Algorithm \ref{alg:APG} will terminate for some finite $\xi_{m}$ when a pre-determined error tolerance is met, it can only produce an approximate stationary solution of $(\mathcal{P}_2)$.
\subsection{Computational Complexity Analysis}
It is obvious that the complexity of Algorithm \ref{alg:APG} in each iteration is dominated by that of \eqref{eq:updatez}, and \eqref{eq:updatev} and the computation of the objective. Here we use the big-$\mathcal{O}$ notation to analyse the complexity of  \eqref{eq:updatez} and \eqref{eq:updatev}.  From Proposition \ref{lem:grad}, it is easy to see that the complexity to calculate the gradient of $\nabla f_{\xi}(\pmb{\theta})$ is $\mathcal{O}(KM^2)$. The projection $P_{\mathcal{C}}(\mathbf{u})$ requires the complexity of $\mathcal{O}(KM)$ which is obvious from Proposition \ref{lem:proj}. Similarly, the complexity of computing $f_{\xi}(\pmb{\theta})$ is $\mathcal{O}(KM^2)$. As a result, the overall complexity of Algorithm \ref{alg:APG} is $\mathcal{O}(I_{P}I_{APG}KM^2)$ where 
 $I_{P}$ and $I_{APG}$ are the number of iterations of the outer loop (i.e. the penalty method)  and the inner loop (i.e. the APG method) in Algorithm \ref{alg:APG}, respectively.  Note that the line search procedure contributes negligible complexity since the gradient can be reused and the projection requires much less complexity. 
In \cite{Ngo2018a}, a SCA method based on solving a sequence of SOCPs was presented. We remark that the complexity of solving an SOCP in each SCA iteration is $\mathcal{O}(\sqrt{K+M}M^3 K^4)$ \cite{ben2001lectures}. Thus the complexity of the SCA method  in  \cite{Ngo2018a} is  $\mathcal{O}(I_{SCA}\sqrt{K+M}M^3 K^4)$ where $I_{SCA}$ is the number of SCA iterations. It is apparent that the  computational complexity of our proposed method is much lower than the SCA method in \cite{Ngo2018a}.This point is numerically demonstrated in the next section.

\section{Numerical Results}\label{sec:numresults}

In this section, numerical results will be provided to evaluate as well as show the benefits of our proposed algorithm. 
\subsection{System Setup}
We consider cell-free massive MIMO systems, where
locations of $M$ APs and $K$ users randomly uniformly generated within an area of $1\times 1$
km$^2$. The wrapped around technique is used. The large-scale fading coefficient is modeled as:
\begin{equation} 
\beta _{mk} = \text {PL}_{mk}\cdot z_{mk}, 
\end{equation}
where $z_{mk}$ is the log-normal shadowing with the standard derivation $\sigma_{sh} = 8$ dB, and $\text {PL}_{mk}$ is the three-slope-based path loss, which is modeled (in dB) as
\begin{align} 
\text {PL}_{mk} = \begin{cases} -L - 35\log _{10} (d_{mk}),~\text {if}~d_{mk}>d_{1}\\[4pt] -L - 15\log _{10} (d_{1}) - 20\log _{10} (d_{mk}), ~\text {if}~d_{0}< d_{mk}\leq d_{1}\\[4pt] -L - 15\log _{10} (d_{1}) - 20\log _{10} (d_{0}), ~ \text {if}~d_{mk} \leq d_{0},\\[4pt] \end{cases}
\end{align}
where we choose $d_{0} = 10$~m, $d_{1} = 50$~m, and $L = 140.7$~dB. The power consumption is summarized as follows: power amplifier coefficient $a_{m} = 0.4,~\forall m$; internal power consumption per antenna $P_{\text{tc},m} = 0.2~\forall m$; fixed power consumption per each backhaul. In addition, we choose ${\So}_{k} = 1$ bit/s/Hz, $B=20$~MHz, $\Pd = 1$~W, $\Pp=0.2$~W, and noise figure is 9~dB. In simulation, we implement  Algorithm \ref{alg:APG} with the line search described in Algorithm \ref{alg:Backtracking-linear-search}  where $\rho=0.5$.
\subsection{ Convergence of  Proposed Algorithm}
In the first experiment, we show the performance of the proposed APG method in comparison with the sequential SOCPs-based method in \cite{Ngo2018a}, with $D=1$~km, $\tau_c = 200$, and $\tau_p = K$. As can be seen in Fig. \ref{fig:convergence}, the proposed method  achieves
the same EE performance as the SOCPs-based method. In terms of the number
of iterations required to output a solution, although our proposed APG method requires more iterations to converge, compared to the SOCPs-based method. However, as we mentioned previously, the proposed method
requires very cheap iteration cost, and thus is far more efficient
in terms of the actual run time. This point is clearly illustrated in
Fig. \ref{fig:runtime}, where we plot the run time of the proposed
algorithm and the SOCPs-based method as a function of $M$. The simulations are built using  MATLAB and the results
are obtained on a Dell laptop with Intel Core\texttrademark{} i7-9750H and
RAM of 16 GB. The stopping criterion is $\varsigma = 10^{-3}$. Compared to the sequential SOCPs-based method, our proposed scheme reduces the run time significantly, i.e, about $62$ times and $53$ times when $M=100$ and $M=400$, respectively. 

Next, we need to verify that the proposed APG algorithm will not violate any PFs, or equivalently, the total loss, $\sum_k^K\Psi_{k}(\pmb{\theta})$, will converge to $0$, regardless the starting point. The numerical results are shown in Fig. \ref{fig:convergence_vs_PFs400} using two different scenarios with the number of APs, $M = 100$ and $M = 400$, respectively.  In this figure, blue curves and orange curves represent total PFs (total loss) and total EE, respectively. 
In the both scenarios, total PFs starts in infeasible domain, and gradually converges to $0$  when the algorithm terminate, as the result of increasing the penalty parameter $\xi$.
\begin{figure}[h!]
	\centering
	\includegraphics[width=0.65\columnwidth]{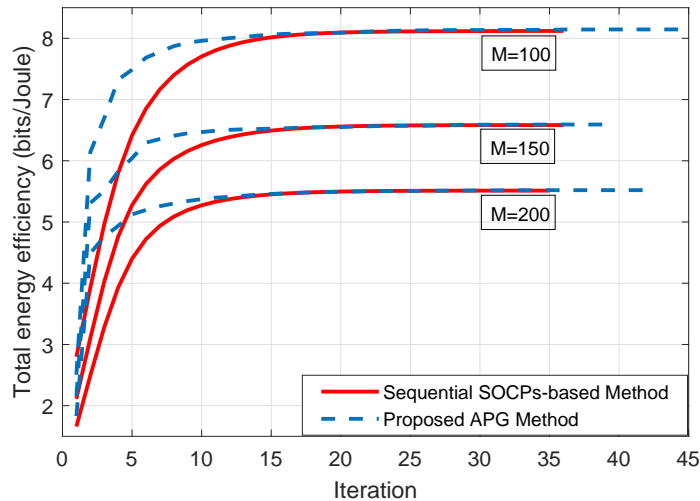}
	
	\caption{Convergence rate of the proposed APG algorithm with $K$ = 40, $N$ = 1.}
	\label{fig:convergence}
\end{figure}
\begin{figure}[h!]
	\centering
	\includegraphics[width=0.65\columnwidth]{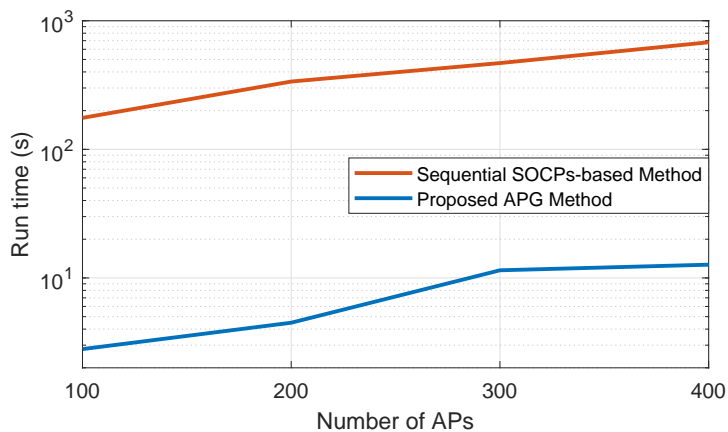}\caption{Run time versus the number of APs $M$. The
		number of users is $K=40$.}
	\label{fig:runtime}
\end{figure}
\begin{figure}[h!]
	\centering
	\includegraphics[width=0.65\columnwidth]{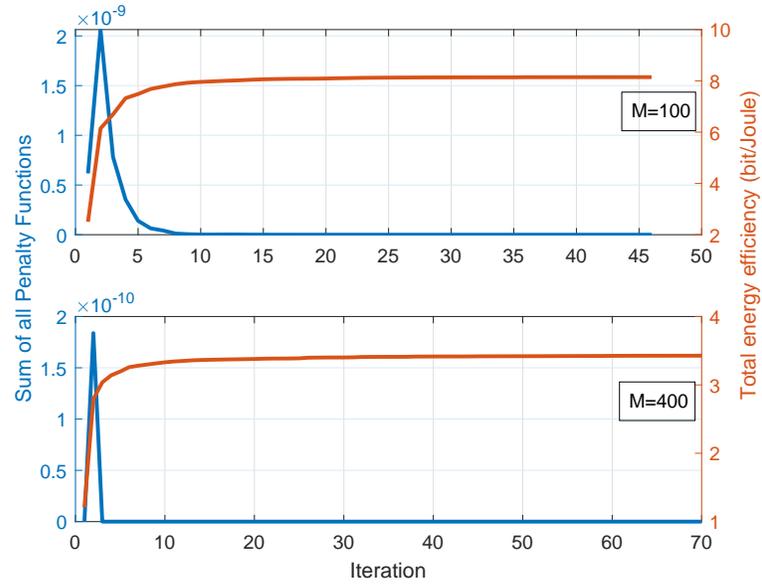}
	
	\caption{Convergence rate vs sum of all penalty functions with $K$ = 40, $N$ = 1.}
	\label{fig:convergence_vs_PFs400}
\end{figure}
%
\subsection{Multi-antenna APs}
\begin{figure}[h!]
	\centering
	\subfigure[]{\includegraphics[width=0.49\textwidth]{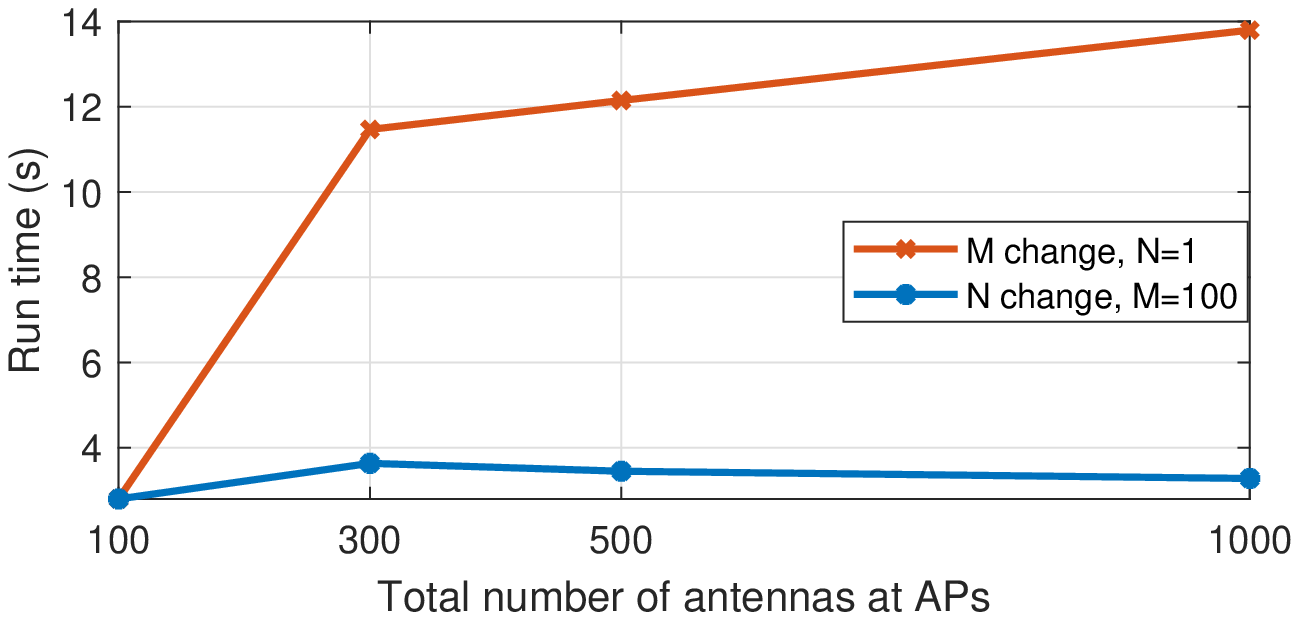}} 
	\subfigure[]{\includegraphics[width=0.49\textwidth]{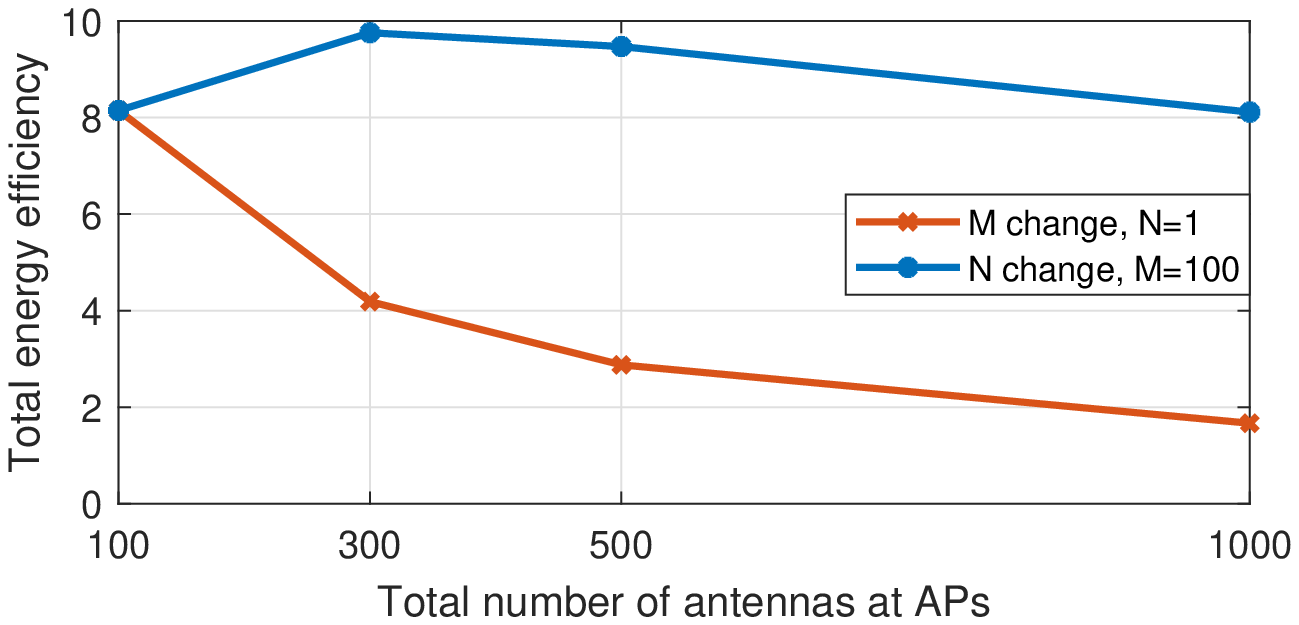}} 
	\caption{Run time and total energy efficiency vs total number of antennas at APs with $K=40$.}
	\label{fig:runtime_M_vs_N}
\end{figure}

Fig. \ref{fig:runtime_M_vs_N} examines the effect of multiple antennas at the APs. We consider two scenarios: ($N=1$ and $M$ changes) and ($N$ changes, $M=100$). For a fair comparison, both scenarios have the same total number of  antennas of all APs, i.e. $MN$ is fixed. The numerical results show that, with fixed number of APs $M =100$, run time just changes a small amount, or even faster when changing the number of antennas per AP from 1 to 10. However, run time increase proportionally with the number of APs using single antenna. This is indeed an expected result since the complexity of our proposed algorithm only depends on the number of APs $M$.  Regarding to the total EE, Fig. \ref{fig:runtime_M_vs_N} (b) shows that multi-antenna APs always outperform single-antenna APs, on the condition that they have the same total number of antennas at APs. The main reason is that more energy is consumed when single-antena's AP is used. Therefore, based on our numerical results, instead of increasing the number of APs, we should increase the number of anntennas per APs to take the advantages of both run time and total EE of the proposed algorithm. 
\subsection{System setup based on number of users}
\begin{figure}[h!]
	\centering
	\subfigure[]{\includegraphics[width=0.49\textwidth]{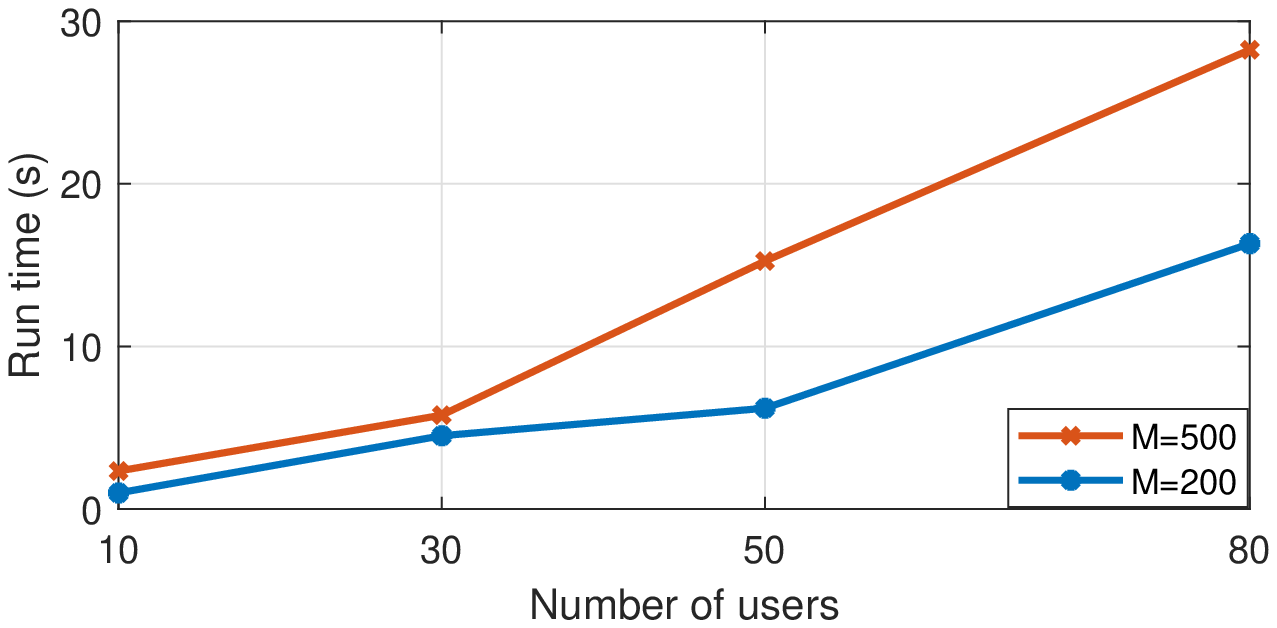}} 
	\subfigure[]{\includegraphics[width=0.49\textwidth]{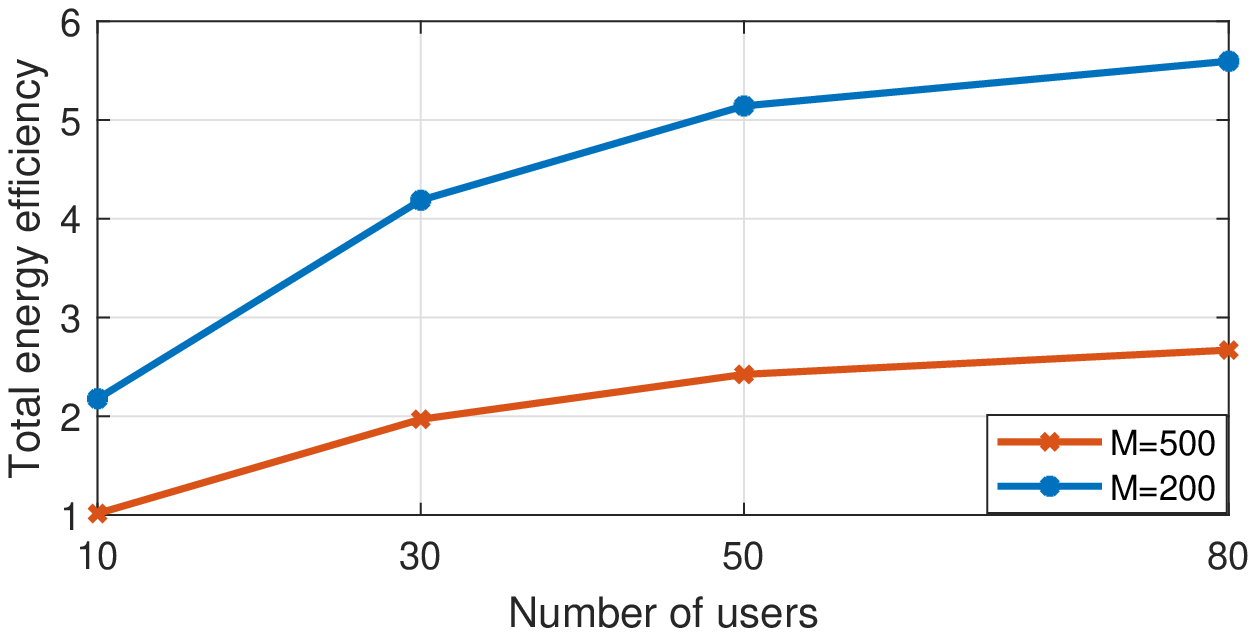}} 
	\caption{Run time and total energy efficiency vs number of users with $N=10$.}
	\label{fig:run_time_Kchange}
\end{figure}

%
Fig. \ref{fig:run_time_Kchange} compares two setups of the system with different number of APs using multi-antennas. It clear that both system's performance, i.e. total EE, and run time heavily depend on number of APs in the system regardless number of users. In this case, we can see that system with $M=200$ always outperforms the one with $M=500$ in term of run time and total EE, when the number of users changes from $10$ to $80$. Therefore, based on number of users in the system, we can setup the system by just activating the suitable number of APs to achieve higher performance and faster running time.
\subsection{System Scale V.s. Total energy efficiency}
\begin{figure}[h!]
	\centering
	\includegraphics[width=0.6\columnwidth]{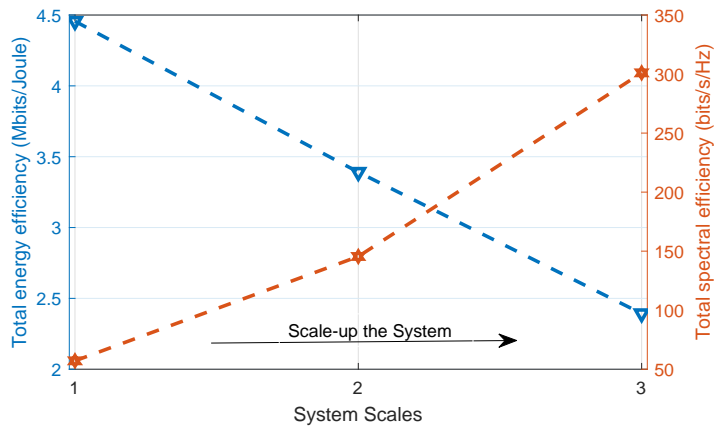}\caption{Total energy efficiency and total spectral efficiency vs system scale with $N=1$.}
	\label{fig:EE_compare_systemScale}
\end{figure}
Fig. \ref{fig:EE_compare_systemScale} considers three different system scale setups, where the first, the second, and the third system scales are corresponding to different number of APs and users $M \times K$, which are $200 \times 40$, $500 \times 100$, and $1000 \times 200$, respectively. We choose $\tau_c = 500$, and $\tau_p = 200$ for all scenarios. The result shows the trade-off between the spectral efficiency and energy efficiency. While the total SE proportionally increases when extending the system scale (because the number of users increases), the total EE shows the opposite trend (because the energy consumption increases). Depending on practical applications and requirements, suitable numbers of APs and users can be designed.
%

\section{Conclusion \label{sec:conclusion}}

We applied a APG method to deal with a large-scale EE optimization problem in cell-free massive MIMO systems, taking into account of power contraints at the APs and QoS contraints at each user. To compare with  
the sequential SOCPs-based approaches, our proposed method achieves the same performance, while its run time is much faster, i.e., one to two orders of magnitude reduction. Therefore, this method has a great potential to deal with large-scale optimization problems in cell-free massive MIMO, and hence, can be applied to practical scenarios. In addition, our optimization problem is done over large-scale fading time scale. It means the optimal power control coefficients can be updated only once for each large-scale fading realization. Since the large-scale fading coefficient changes very slowly with time, the system can fully operate in a bursty communication scenario with some random activation of the users.

\appendix{}
\subsection{Useful properties and Lemmas}\label{sec_upl}
In this section, we provide some useful properties and lemmas related to Lipschitz continuity that shall be used to analyse the Lipschitz continuity of the gradient of the objective in \eqref{eq:opt_theta}. Some of the following properties and lemmas are in fact extension of the results in \cite[Section 1.5]{Weaver:2018:Lipschitz} for scalar-valued functions. We also provide proofs to these for the sake of completeness.

\subsubsection{Linear Combinations}\label{sec_lc}

Let $f_{1}:\mathbb{R}^n\to\mathbb{R}^n$ and $f_{2}:\mathbb{R}^{n}\to \mathbb{R}^{n}$ be Lipschitz continuous with a Lipschitz constants
$L_{1}$ and $L_{2}$, respectively. Then $f_{1}\pm f_{2}$ is Lipschitz
continuous with a Lipschitz constant $L_{1}+L_{2}$.

\begin{IEEEproof}
	\begin{equation*}
	\begin{aligned}
	||(f_1+f_2)(\mathbf{x})-(f_1+f_2)(\mathbf{y})|| \mathop \leq \limits ^{(a)}||f_1(\mathbf{x})-f_1(\mathbf{y})||+||f_2(\mathbf{x})-f_2(\mathbf{y})|| \leq \left(L_1+L_2\right) ||\mathbf{x}-\mathbf{y}||,
	\end{aligned}
	\end{equation*}
	where $(a)$ is the triangle inequality.
\end{IEEEproof}
%

\subsubsection{Product of Functions}\label{prod_lipz}

Let $f_{1}:\mathbb{R}^n\to\mathbb{R}$ and $f_{2}:\mathbb{R}^{n}\to \mathbb{R}^{m}$ be bounded functions over a bounded set
$\mathcal{S}$, such that
$|f_{1}(\mathbf{x})|\leq M_1$ and $||f_{2}(\mathbf{x})||\leq M_2$ for
all $\mathbf{x}\in\mathcal{S}$. Further suppose that $f_{1}$ and
$f_{2}$ are Lipschitz continuous with a Lipschitz constants $L_{1}$
and $L_{2}$, respectively. Then the product of $f_{1}$ and $f_{2}$
is Lipschitz continuous with a Lipschitz constant $(M_1L_1 + M_2L_2)$. 
\begin{IEEEproof}
	\begin{equation*}
	\begin{aligned}
	||f_1f_2(\mathbf{x})-f_1f_2(\mathbf{y})||
	&\leq||f_1(\mathbf{x})\left(f_2(\mathbf{x})-f_2(\mathbf{y})\right) + \left(f_1(\mathbf{x})-f_1(\mathbf{y})\right)f_2(\mathbf{y})||\\
	&\leq||f_1(\mathbf{x})\left(f_2(\mathbf{x})-f_2(\mathbf{y})\right)|| + ||\left(f_1(\mathbf{x})-f_1(\mathbf{y})\right)f_2(\mathbf{y})||\\
	&\leq M_1||f_2(\mathbf{x})-f_2(\mathbf{y})|| + M_2||f_1(\mathbf{x})-f_1(\mathbf{y})|| \leq (M_1L_1 + M_2L_2) ||\mathbf{x}-\mathbf{y}||.
	\end{aligned}
	\end{equation*}
\end{IEEEproof}
\subsubsection{The Composition of Functions}\label{subsec_compfunc}

Let $f_{1}:\mathbb{R}\to\mathbb{R}$ and $f_{2}:\mathbb{R}^{n}\to \mathbb{R}$
be Lipschitz continuous with Lipschitz constants $L_{1}$ and $L_{2}$,
respectively. Then the composite function $f_{1}\circ f_{2}$ is Lipschitz
continuous with a Lipschitz constant $L_{1}L_{2}$. 
\begin{IEEEproof}
	\begin{equation*}
	\begin{aligned}
	||f_{1}(f_{2}(\mathbf{x}))-f_{1}(f_{2}(\mathbf{y}))|| \leq L_{1}||f_{2}(\mathbf{x})-f_{2}(\mathbf{y})||\leq L_{1}L_{2}||\mathbf{x}-\mathbf{y}||.
	\end{aligned}
	\end{equation*}
\end{IEEEproof}
\subsubsection{The Quotient of Functions}\label{quoti_subsec}

Let $f_{1}:\mathbb{R}^n\to\mathbb{R}^n$ and $f_{2}:\mathbb{R}^{n}\to \mathbb{R}$ be Lipschitz continuous with Lipschitz constants
$L_{1}$ and $L_{2}$, respectively on a bounded set $\mathcal{S}$, such that
$||f_{1}(\mathbf{x})||\leq M$ for
all $\mathbf{x}\in\mathcal{S}$,
and further assume that there is a constant $c>0$ such that $|f_{2}(\mathbf{x})|\geq c$
for all $\mathbf{x}\in\mathcal{S}$. Then $f_{1}/f_{2}$ is Lipschitz
continuous with a Lipschitz constant $(ML_1 + L_2/c^3)$. 

\begin{IEEEproof}
	First, since $f_{2}$ is Lipschitz continuous with a Lipschitz constants $L_{2}$, and $|f_{2}(\mathbf{x})|\geq c$, we have
	\begin{equation*}
	\begin{aligned}
	|1 / f_2(x)-1 / f_2(y)| =\frac{|f_2(x)-f_2(y)|}{|f_2(x) f_2(y)|} \leq \frac{L_{2}}{c^{2}}||\mathbf{x}-\mathbf{y}||.
	\end{aligned}
	\end{equation*}
	Thus $1 / f_2$ is Lipschitz continuous with a Lipschitz constant $\frac{L_{2}}{c^{2}}$. Next, applying the product property in \ref{prod_lipz} for $f_1$ and $ 1 / f_2$,  we have 
	\begin{equation*}
	\begin{aligned}
	||(f_1/f_2)(\mathbf{x})-(f_1/f_2)(\mathbf{y})||\le (ML_1 + L_2/c^3) ||\mathbf{x}-\mathbf{y}||.
	\end{aligned}
	\end{equation*}
\end{IEEEproof}

\begin{lemma}\label{lem1}
	Let $f:\mathbb{R}^n\to\mathbb{R}$, and assume that $||\mathbf{x} + \mathbf{y}||  \le c,$ $\forall \mathbf{x},\mathbf{y}  \in\mathcal{S}$. Then  $f(\mathbf{x}) = (\mathbf{a}^T\mathbf{x})^2$ is Lipschitz continuous with a Lipschitz constant $c||\mathbf{a}||^2$.
\end{lemma}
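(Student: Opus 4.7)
The plan is to exploit the difference-of-squares factorization, which turns the quadratic form $f(\mathbf{x}) = (\mathbf{a}^T\mathbf{x})^2$ into a product of two linear expressions in $\mathbf{x}$ and $\mathbf{y}$. This reduces the Lipschitz estimate to two applications of the Cauchy--Schwarz inequality, one of which is controlled by the Lipschitz displacement $\|\mathbf{x} - \mathbf{y}\|$ and the other by the boundedness hypothesis $\|\mathbf{x}+\mathbf{y}\| \le c$.

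More concretely, I would first write
\begin{equation*}
f(\mathbf{x}) - f(\mathbf{y}) = (\mathbf{a}^T\mathbf{x})^2 - (\mathbf{a}^T\mathbf{y})^2 = \bigl(\mathbf{a}^T(\mathbf{x}-\mathbf{y})\bigr)\bigl(\mathbf{a}^T(\mathbf{x}+\mathbf{y})\bigr),
\end{equation*}
taking absolute values, and then bound each factor separately by Cauchy--Schwarz:
\begin{equation*}
|\mathbf{a}^T(\mathbf{x}-\mathbf{y})| \le \|\mathbf{a}\|\,\|\mathbf{x}-\mathbf{y}\|, \qquad |\mathbf{a}^T(\mathbf{x}+\mathbf{y})| \le \|\mathbf{a}\|\,\|\mathbf{x}+\mathbf{y}\| \le c\|\mathbf{a}\|.
\end{equation*}
Multiplying these two bounds yields $|f(\mathbf{x}) - f(\mathbf{y})| \le c\|\mathbf{a}\|^2\,\|\mathbf{x}-\mathbf{y}\|$, which is precisely the claimed Lipschitz constant $c\|\mathbf{a}\|^2$.

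There is no real obstacle here: the whole argument is a one-line factorization followed by two standard inequalities, and the role of the assumption $\|\mathbf{x}+\mathbf{y}\|\le c$ is simply to make the sum factor bounded on $\mathcal{S}$ so that the overall bound is linear in $\|\mathbf{x}-\mathbf{y}\|$. The only small thing to be careful about is to state explicitly that the estimate is valid for all $\mathbf{x},\mathbf{y}\in\mathcal{S}$ (so that the hypothesis on $\mathbf{x}+\mathbf{y}$ may be invoked), since $f$ itself is not globally Lipschitz on $\mathbb{R}^n$. This lemma will subsequently be combined with the linear-combination, product, composition, and quotient rules in Appendix~\ref{sec_upl} to derive the Lipschitz constant of $\nabla f_{\xi}(\pmb{\theta})$ over the bounded feasible set $\mathcal{C}$.
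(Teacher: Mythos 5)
Your proof is correct and follows essentially the same route as the paper: the difference-of-squares factorization $(\mathbf{a}^T\mathbf{x})^2-(\mathbf{a}^T\mathbf{y})^2=\bigl(\mathbf{a}^T(\mathbf{x}-\mathbf{y})\bigr)\bigl(\mathbf{a}^T(\mathbf{x}+\mathbf{y})\bigr)$ followed by Cauchy--Schwarz applied to each factor, with the hypothesis $\|\mathbf{x}+\mathbf{y}\|\le c$ controlling the sum factor. Your explicit remark that the bound only holds for $\mathbf{x},\mathbf{y}\in\mathcal{S}$ is a welcome clarification that the paper leaves implicit.
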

\begin{IEEEproof}
	\begin{align}\label{gxminusgy}
	|f(\mathbf{x}) - f(\mathbf{y})|  &= | (\mathbf{a}^T\mathbf{x})^2 - (\mathbf{a}^T\mathbf{y})^2 | = | \left(\mathbf{a}^T\mathbf{x} - \mathbf{a}^T\mathbf{y} \right)\left(\mathbf{a}^T\mathbf{x} + \mathbf{a}^T\mathbf{y} \right)|\notag\\
	&\mathop \leq \limits ^{(b1)} \left|\mathbf{a}^T\mathbf{x} + \mathbf{a}^T\mathbf{y} \right|||\mathbf{a}|||| \mathbf{x} - \mathbf{y}||\mathop \leq \limits ^{(b2)} c||\mathbf{a}||^2|| \mathbf{x} - \mathbf{y}||,
	\end{align}
	where $(b1)$ and $(b2)$ base on  the Cauchy-Schwarz inequality.
\end{IEEEproof}

\begin{lemma}\label{lem2}
	Let $f:\mathbb{R}^n\to\mathbb{R}$, and assume that $\left(||\mathbf{a}\odot\mathbf{x}|| + ||\mathbf{a}\odot\mathbf{y}||\right)   \le d,$ $\forall \mathbf{x},\mathbf{y} \in \mathbb{R}^n$, then a Lipschitz constant of  $f(\mathbf{x}) = ||\mathbf{a}\odot\mathbf{x}||^2$ is Lipschitz continuous with a Lipschitz constant $d||\mathbf{a}||$.
\end{lemma}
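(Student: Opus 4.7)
The plan is to mimic the approach used for Lemma \ref{lem1}, which handles $f(\mathbf{x}) = (\mathbf{a}^T\mathbf{x})^2$ via a difference-of-squares factorization. Here the key observation is that $f(\mathbf{x}) - f(\mathbf{y}) = \|\mathbf{a}\odot\mathbf{x}\|^2 - \|\mathbf{a}\odot\mathbf{y}\|^2$ admits the same kind of factorization, so I would start by writing
\[
|f(\mathbf{x}) - f(\mathbf{y})| = \bigl|\|\mathbf{a}\odot\mathbf{x}\| - \|\mathbf{a}\odot\mathbf{y}\|\bigr| \cdot \bigl(\|\mathbf{a}\odot\mathbf{x}\| + \|\mathbf{a}\odot\mathbf{y}\|\bigr).
\]

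Next I would bound each factor separately. The second factor is directly bounded by $d$ through the standing hypothesis of the lemma. For the first factor I would invoke the reverse triangle inequality to get $\bigl|\|\mathbf{a}\odot\mathbf{x}\| - \|\mathbf{a}\odot\mathbf{y}\|\bigr| \le \|\mathbf{a}\odot(\mathbf{x}-\mathbf{y})\|$, and then apply the elementwise estimate $\|\mathbf{a}\odot\mathbf{z}\| \le \|\mathbf{a}\|\,\|\mathbf{z}\|$, which follows from $\sum_i a_i^2 z_i^2 \le (\max_i a_i^2)\sum_i z_i^2 \le \|\mathbf{a}\|^2\|\mathbf{z}\|^2$. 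Chaining these three inequalities gives
\[
|f(\mathbf{x}) - f(\mathbf{y})| \le d\,\|\mathbf{a}\|\,\|\mathbf{x} - \mathbf{y}\|,
\]
which is precisely the asserted Lipschitz bound with constant $d\|\mathbf{a}\|$.

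I do not anticipate any real technical obstacle; the argument is the natural analogue of Lemma \ref{lem1} with the inner product $\mathbf{a}^T\mathbf{x}$ replaced by the Hadamard product $\mathbf{a}\odot\mathbf{x}$, and the role of the Cauchy--Schwarz step is taken by the trivial bound $\|\mathbf{a}\odot\mathbf{z}\| \le \|\mathbf{a}\|\,\|\mathbf{z}\|$. The only subtle point worth flagging is that, as literally stated, the hypothesis $\|\mathbf{a}\odot\mathbf{x}\| + \|\mathbf{a}\odot\mathbf{y}\| \le d$ cannot hold for \emph{all} $\mathbf{x}, \mathbf{y} \in \mathbb{R}^n$; it is implicitly understood to hold on the bounded set $\mathcal{S}$ of interest (consistent with the product-of-functions discussion in Section~\ref{prod_lipz}). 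Since the proof only uses this bound to control the second factor of the difference-of-squares expansion, the argument carries through verbatim once $\mathbf{x}, \mathbf{y}$ are restricted to such a set.
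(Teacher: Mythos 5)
Your proof is correct and follows essentially the same route as the paper's: factor the difference of squares, bound the sum factor by $d$ via the hypothesis, apply the reverse triangle inequality to the difference factor, and finish with $\|\mathbf{a}\odot\mathbf{z}\|\le\|\mathbf{a}\|\,\|\mathbf{z}\|$ (which the paper attributes to Cauchy--Schwarz but you justify more explicitly). Your remark that the hypothesis should be read as holding on a bounded set $\mathcal{S}$ rather than all of $\mathbb{R}^n$ is a fair observation and does not affect the argument.
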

\begin{IEEEproof}
	\begin{align}
	||f(\mathbf{x}) - f(\mathbf{y})|| 
	& = \left| ||\mathbf{a}\odot\mathbf{x}||^2 - ||\mathbf{a}\odot\mathbf{y}||^2 \right| = \left| \left(||\mathbf{a}\odot\mathbf{x}|| - ||\mathbf{a}\odot\mathbf{y}||\right) \left( ||\mathbf{a}\odot\mathbf{x}|| + ||\mathbf{a}\odot\mathbf{y}||\right)\right|\notag\\
	& \mathop \leq \limits ^{(c1)} ||\mathbf{a}\odot\mathbf{x}-\mathbf{a}\odot\mathbf{y}|| \left( ||\mathbf{a}\odot\mathbf{x}|| + ||\mathbf{a}\odot\mathbf{y}||\right) \mathop \leq \limits ^{(c2)} d||\mathbf{a}|| ||\mathbf{x} - \mathbf{y}||,
	\end{align}
	where $(c1)$ and  $(c2)$ are due to the triangle inequality and the Cauchy-Schwarz inequality, respectively.
\end{IEEEproof}
\subsection{Proof of Proposition \ref{pp1}}\label{sec_lips_proof}
First, note that $f_{\xi}(\pmb{\theta})$ is proper as $\text{dom} f_{\xi} \ne 0$. Next, we need to prove that $f_{\xi}(\pmb{\theta})$ is Lipschitz continuous gradient. $\nabla f_{\xi}(\pmb{\theta})$ can be calculated as
\begin{align}
\nabla f_{\xi}(\pmb{\theta}) =B\frac{v(\pmb{\theta})\nabla u(\pmb{\theta})-u(\pmb{\theta})\nabla v(\pmb{\theta})}{v(\pmb{\theta})^{2}} - 2\xi\sum_{k=1}^{K}\left[\max\left(0, { g}_{k}(\pmb{\theta}) \right )\right] \nabla g_{k}(\pmb{\theta}).\label{eq:grad:obj}
\end{align}
As $\nabla f_{\xi}(\pmb{\theta})$ is computed from $u_k(\pmb{\theta})$, $v(\pmb{\theta})$, ${ g}_{k}(\pmb{\theta})$, $\nabla u_k(\pmb{\theta})$, $\nabla v(\pmb{\theta})$, and $\nabla g_{k}(\pmb{\theta})$ we now need to find the Lipschitz constants of these terms, and then apply properties in Appendix \ref{sec_upl} to conclude the Lipschitz constant of $\nabla f_{\xi}(\pmb{\theta})$. To this end the following results are in order

\subsubsection{ $\nabla u_k(\pmb{\theta})$ is Lipschitz continuous, and $||\nabla u_k(\pmb{\theta})||$ is bounded from above}\label{sec_nuk}

Recall that 
\begin{align}\label{eq_guk}
\nabla u_k(\pmb{\theta}) = \frac{\nabla n_k(\pmb{\theta})}{\text{ln}2\left(n_k(\pmb{\theta}) + d_k(\pmb{\theta})\right)} - \frac{n_k(\pmb{\theta}) \nabla d_k(\pmb{\theta})}{\text{ln}2\left(n_k(\pmb{\theta}) + d_k(\pmb{\theta})\right)d_k(\pmb{\theta})},
\end{align}
where
\begin{equation}
n_k(\pmb{\theta}) \triangleq \Pd N^{2}\bigl(\tilde{\pmb{\gamma}}_{kk}^{T}\mathbf{A}_{k}\pmb{\theta}\bigr)^{2},
\end{equation}
\begin{equation}
d_k(\pmb{\theta}) \triangleq \Pd N^{2}\sum\limits _{k'\neq k}^{K}\bigl(\tilde{\pmb{\gamma}}_{k'k}^{T}\mathbf{A}_{k'}\pmb{\theta}\bigr)^{2}+\Pd N\sum\limits _{k'=1}^{K}||\tilde{\pmb{\kappa}}_{k}\odot(\mathbf{A}_{k'}\pmb{\theta})||^{2}+1,
\end{equation}

\begin{align}
\nabla n_k(\pmb{\theta}) = 2\Pd N^{2}\mathbf{A}_{k}^{T}\tilde{\pmb{\gamma}}_{kk}\tilde{\pmb{\gamma}}_{kk}^{T}\mathbf{A}_{k}\pmb{\theta},
\end{align}
and 
\begin{align}
\nabla d_k(\pmb{\theta}) = 2\Pd N^{2}\sum_{k'\ne k}^{K}\mathbf{A}_{k'}^{T}\tilde{\pmb{\gamma}}_{k'k}\tilde{\pmb{\gamma}}_{k'k}^{T}\mathbf{A}_{k'}\pmb{\theta} + 2\Pd N\sum_{k'=1}^{K}\mathbf{A}_{k'}^{T}\mathbf{B}_{k}\mathbf{A}_{k'}\pmb{\theta}.
\end{align}
In order to show that $\nabla u_k(\theta)$ is Lipschitz continuous, we will prove that the first term and the second term of the right hand side in (\ref{eq_guk}) are Lipschitz continuous, respectively. First, note that by applying Lemma \ref{lem1} and Lemma \ref{lem2}, it is easy to see that $n_k(\pmb{\theta}),~ d_k(\pmb{\theta})$ and $n_k(\pmb{\theta}) + d_k(\pmb{\theta})$ are Lipschitz continuous and their Lipschitz constants are 
\begin{align}\label{eq_lc1}
L_{n_k} =2\Pd N^{3/2}\bigl||\tilde{\pmb{\gamma}}_{kk}^{T}\mathbf{A}_{k}||^2,
\end{align}
\begin{align}\label{eq_lc2}
L_{d_k} =2\Pd N^{3/2}\sum_{k'\ne k}^K\bigl||\tilde{\pmb{\gamma}}_{k'k}^{T}\mathbf{A}_{k'}||^2 + 2\sqrt{K/N}\tilde{\pmb{\kappa}}_{k}^T \mathbf{1}_M||\tilde{\pmb{\kappa}}_{k}||,
\end{align}
and
\begin{align}\label{eq_lc3}
L_{n_k,d_k} =2\Pd N^{3/2}\sum_{k=1}^K\bigl||\tilde{\pmb{\gamma}}_{kk}^{T}\mathbf{A}_{k}||^2 + 2\sqrt{K/N}\tilde{\pmb{\kappa}}_{k}^T \mathbf{1}_M||\tilde{\pmb{\kappa}}_{k}||,
\end{align}
respectively. Next, we have $
\nabla n_k(\pmb{\theta}) = 2\Pd N^{2}\mathbf{A}_{k}^{T}\tilde{\pmb{\gamma}}_{kk}\tilde{\pmb{\gamma}}_{kk}^{T}\mathbf{A}_{k}\pmb{\theta} \triangleq \mathbf{C}_{n_k}\pmb{\theta}$
is Lipschitz continuous, as
\begin{align}\label{eq_lgu}
||\nabla n_k(\mathbf{x}) - \nabla n_k(\mathbf{y})|| = ||\mathbf{C}_{n_k}(\mathbf{x} - \mathbf{y})|| \le \lambda_{\text{max}}(\mathbf{C}_{n_k})||\mathbf{x} - \mathbf{y}||,
\end{align}
where $\lambda_{\max }(\mathbf{C}_{n_k})$
 is the largest eigenvalue of $\mathbf{C}_{n_k}$. Moreover,
\begin{align}\label{eq_uu}
||\nabla n_k(\pmb{\theta})|| = ||\mathbf{C}_{n_k}\pmb{\theta}||\le \frac{\lambda_{\text{max}}(\mathbf{C}_{n_k})\sqrt{M}}{\sqrt{N}}, 
\end{align}
and
\begin{align}\label{eq_lbu}
\text{ln}2\left(n_k(\pmb{\theta}) + d_k(\pmb{\theta})\right) \ge \ln 2.
\end{align}
Then, from (\ref{eq_lc3}), (\ref{eq_lgu}), (\ref{eq_uu}), and (\ref{eq_lbu}), by applying the quotient of functions in Section \ref{quoti_subsec}, we have that 
$\frac{\nabla n_k(\pmb{\theta})}{\text{ln}2\left(n_k(\pmb{\theta}) + d_k(\pmb{\theta})\right)}$ is Lipschitz continuous with a Lipschitz constant as follow
\begin{align}\label{eq_lc4}
L_{t1} =\frac{(\lambda_{\text{max}}(\mathbf{C}_{n}))^2\sqrt{M}}{\sqrt{N}} + \frac{L_{n,d}}{(\ln 2)^2}.
\end{align}
By applying the similar method to $\nabla n_k(\pmb{\theta})$, we can prove that $\nabla d_k(\pmb{\theta})$ is Lipschitz continuous with a Lipschitz constant written as
\begin{align}\label{eq_lc5}
L_{\nabla d_k} = \lambda_{\text{max}}(\mathbf{C}_{d_k}),
\end{align}
where $\mathbf{C}_{d_k} \triangleq 2\Pd N^{2}\sum_{k'\ne k}^{K}\mathbf{A}_{k'}^{T}\tilde{\pmb{\gamma}}_{k'k}\tilde{\pmb{\gamma}}_{k'k}^{T}\mathbf{A}_{k'} + 2\Pd N\sum_{k'=1}^{K}\mathbf{A}_{k'}^{T}\mathbf{B}_{k}\mathbf{A}_{k'}$. Moreover, we have
\begin{align}\label{eq_uu1}
||\nabla d_k(\pmb{\theta})|| \le \frac{\lambda_{\text{max}}(\mathbf{C}_{d_k})\sqrt{M}}{\sqrt{N}}, 
\end{align}
and
\begin{align}\label{eq_uu2}
|n_k(\pmb{\theta})| \mathop < \limits ^{(d1)} n_{\text{max}} < \infty,
\end{align}
where $(d1)$ is based on the fact that $|n_k(\pmb{\theta})|$ is bounded by the power constraints. Then, from (\ref{eq_lc1}), (\ref{eq_lc5}), (\ref{eq_uu1}), and (\ref{eq_uu2}), by applying the product of functions in Section \ref{prod_lipz}, it is easy to see  that $n_k(\pmb{\theta}) \nabla d_k(\pmb{\theta})$ is Lipschitz continuous with a Lipschitz constant given by
\begin{align}\label{eq_uu3}
L_{n_k,\nabla d_k} =  n_{\text{max}}L_{n_k} + \frac{(\lambda_{\text{max}}(\mathbf{C}_{d_k}))^2\sqrt{M}}{\sqrt{N}},
\end{align}
and
\begin{align}\label{eq_uu4}
||n_k(\pmb{\theta}) \nabla d_k(\pmb{\theta})|| \le  \frac{n_{\text{max}}\lambda_{\text{max}}(\mathbf{C}_{d_k})\sqrt{M}}{\sqrt{N}}.
\end{align}
Similar to (\ref{eq_uu2}), we have
\begin{align}\label{eq_ad3}
|d_k(\pmb{\theta})| \mathop < d_{\text{max}} < \infty,
\end{align} for some $d_{max}$, and thus
\begin{align}\label{eq_ad1}
|n_k(\pmb{\theta}) + d_k(\pmb{\theta})| \mathop < d_{\text{max}} + n_{\text{max}} < \infty.
\end{align}
Then, from (\ref{eq_lc2}), (\ref{eq_lc3}), (\ref{eq_ad3}), (\ref{eq_ad1}) and by applying the product of functions in Section \ref{prod_lipz}, we can see that $\left(n_k(\pmb{\theta}) + d_k(\pmb{\theta})\right)d_k(\pmb{\theta})$ is Lipschitz continuous with a Lipschitz constant given by 
\begin{align}\label{eq_ad2}
L_{n_k+d_k,d_k} =  d_{\text{max}}L_{d_k} + (n_{\text{max}}+d_{\text{max}})L_{n_k,d_k}.
\end{align}
Next, we have
\begin{align}\label{eq_lbu1}
 \text{ln}2\left(n_k(\pmb{\theta}) + d_k(\pmb{\theta})\right)d_k(\pmb{\theta}) \ge \text{ln}2.
\end{align}
We now can conclude, from (\ref{eq_uu3}), (\ref{eq_uu4}), (\ref{eq_ad2}), and (\ref{eq_lbu1}) and  by applying the quotient of functions in the Section \ref{quoti_subsec}, that $\frac{n_k(\pmb{\theta}) \nabla d_k(\pmb{\theta})}{\text{ln}2\left(n_k(\pmb{\theta}) + d_k(\pmb{\theta})\right)d_k(\pmb{\theta})}$ is Lipschitz continuous with the following Lipschitz constant
\begin{align}\label{eq_lbu2}
 L_{t2}=\frac{n_{\text{max}}\lambda_{\text{max}}(\mathbf{C}_{d_k})\sqrt{M}L_{n_k,\nabla d_k}}{\sqrt{N}} + \frac{L_{n_k+ d_k,d_k}}{(\ln 2)^2}.
\end{align}
Finally, from (\ref{eq_lc4}), and (\ref{eq_lbu2}), by applying the linear combination of the functions in Section \ref{sec_lc}, it follows that $\nabla u_k(\pmb{\theta})$ is Lipschitz continuous with a Lipschitz constant given by
\begin{align}\label{eq_lbu3}
L_{\nabla u_k} =  L_{t1} +  L_{t2},
\end{align}
and $||\nabla u_k(\pmb{\theta})||$ is bounded as
\begin{align}\label{eq_gub}
	||\nabla u_k(\pmb{\theta})||
\le \frac{(\lambda_{\text{max}}(\mathbf{C}_{n_k}) + n_{\text{max}}\lambda_{\text{max}}(\mathbf{C}_{d_k}))\sqrt{M}}{\ln 2\sqrt{N}} \triangleq \zeta_{\nabla u_k}.
\end{align}

\subsubsection{  $u_k(\pmb{\theta})$ is bounded and Lipschitz continuous}
First, we have
\begin{equation}\label{uk_bounded}
u_k(\pmb{\theta}) =\frac{\tauc-\taup}{\tauc}\log_2\left(1+\frac{n_k(\pmb{\theta})}{d_k(\pmb{\theta})}\right) \le \frac{\tauc-\taup}{\tauc}\log_2\left(1 + n_{\text{max}} \right) \triangleq \zeta_{ u_k}. 
\end{equation}
Next, to prove  $u_k(\pmb{\theta})$ is Lipschitz continuous, we first consider the function
$h_1(x)=\text{log}_2(x)$  over the domain $1 \le x \le x_{\text{max}}$. 
 Note that $h_1(x)$ is continuously differentiable and thus we have
\begin{align}
\sup _{t \in(1, x_{\text{max}})}\left|h_{1}^{\prime}(t)\right|=\sup _{t \in(1, x_{\text{max}})}\log_2 {e}|\frac{1}{t}| \leq \log_2 {e}.
\end{align}
By the mean value theorem, there exists some $\xi$ between $y$ and $z,$  $y, z \in(1, x_{\text{max}})$, such that
\begin{align}\label{eq_lh1}
\left|h_{1}(z)-h_{1}(y)\right|=\left|h_{1}^{\prime}(\xi)(z-y)\right| \leq \sup _{t \in(1, x_{\text{max}})}\left|h_{1}^{\prime}(t)\right||z-y|\leq \text{log}_2 {e}|z-y|.
\end{align}
In other words, $h_1(x)=\text{log}_2(x)$ is Lipschitz continuous with a constant $ \text{log}_2 {e} $. 
Next we consider the function $h_2(x)= 1+\frac{n_k(\pmb{\theta})}{d_k(\pmb{\theta})} $, for which we have
\begin{align}\label{eq_db1}
1 \le |d_k(\pmb{\theta})| \mathop < \limits ^{(d3)} d_{\text{max}} < \infty
\end{align}
where $(d3)$ is based on the fact that $|d_k(\pmb{\theta})|$ is bounded by the power constraints. Then,  from (\ref{eq_lc1}), (\ref{eq_lc2}), (\ref{eq_uu2}), and (\ref{eq_db1}), the quotient property in Section \ref{quoti_subsec} implies that  $h_2(x)$ is  Lipschitz continuous with a Lipschitz constant found as
\begin{align}\label{eq_db2}
L_{h_2} = (n_{\text{max}}+d_{\text{max}})(L_n + L_d) + L_d.
\end{align}
Finally, from (\ref{eq_lh1}), and (\ref{eq_db2}) by applying composition property in Section \ref{subsec_compfunc}, we can prove that $u_k(\pmb{\theta})$ is  Lipschitz continuous with the following Lipschitz constant 
\begin{align}\label{eq_db3}
L_{u_k} = L_{h_2}\log_2 {e}.
\end{align}
\subsubsection{ $v(\pmb{\theta})$ and  $\nabla v(\pmb{\theta})$ are bounded and Lipschitz continuous}
It is easy to see that $v(\pmb{\theta})$ is bounded by
\begin{align}\label{eq_db4}
v(\pmb{\theta}) \le \bar{P}_{\mathrm{fix}}+\Pd N_{0}N\sum\limits _{m=1}^{M}\frac{1}{N\alpha_{m}} \triangleq \zeta_{v}.
\end{align}
Next, let us rewrite   $ v(\pmb{\theta}) $ as
\begin{align}\label{eq_v2}
v(\pmb{\theta}) =\bar{P}_{\mathrm{fix}}+\Pd N_{0}N\sum\limits _{m=1}^{M}\frac{1}{\alpha_{m}}||\pmb{\theta}_{m}||^{2} \triangleq \bar{P}_{\mathrm{fix}}+\mathbf{C}_{v}||\pmb{\theta}||^{2}.
\end{align}
Then, following the same steps in  Section \ref{sec_nuk}, we can show that $v(\pmb{\theta})$ is Lipschitz continuous with a Lipschitz constant $L_{v}$ expressed as
\begin{align}\label{eq_v3}
L_{v} = \frac{2\sqrt{M}\lambda_{\text{max}}(\mathbf{C}_{v})}{\sqrt{N}}.
\end{align}
Next, we have
\begin{align}\label{eq_v4}
\nabla v(\pmb{\theta}) =\Pd N_{0}N\left[\frac{2}{\alpha_{1}}\pmb{\theta}_{1};\frac{2}{\alpha_{2}}\pmb{\theta}_{2};\ldots,\frac{2}{\alpha_{M}}\pmb{\theta}_{M}\right] \triangleq \mathbf{C}_{\nabla v}\pmb{\theta},
\end{align}
and thus $||\nabla v(\pmb{\theta})||$ is bounded by
\begin{align}\label{eq_v5}
||\nabla v(\pmb{\theta})|| \le  \frac{\sqrt{M}\lambda_{\text{max}}(\mathbf{C}_{\nabla v})}{\sqrt{N}} \triangleq \zeta_{\nabla v},
\end{align}
and $\nabla v(\pmb{\theta})$ is Lipschitz continuous with the following Lipschitz constant
\begin{align}\label{eq_v6}
L_{\nabla v} = \lambda_{\text{max}}(\mathbf{C}_{\nabla v}).
\end{align}

\subsubsection{ $g_k(\pmb{\theta})$ and $\nabla g_k(\pmb{\theta})$ are bounded and Lipschitz continuous}\label{sec_gk} 
By following the same method in the Section \ref{sec_nuk} and using the fact that $d_k(\pmb{\theta}) \ge 1$, we can show that $g_k(\pmb{\theta})$  is  Lipschitz continuous with $L_{g_k} = a_{k} L_{d_k} + ||\tilde{\pmb{\gamma}}_{kk}^{T}\mathbf{A}_{k}||$ and bounded as
\begin{equation}
g_k(\pmb{\theta}) \le a_{k} d_k(\pmb{\theta}) + \tilde{\pmb{\gamma}}_{kk}^{T}\mathbf{A}_{k}\pmb{\theta}  \le  a_{k}|d_k(\pmb{\theta})| + \frac{n_{\text{max}}}{\Pd N^{2}}  <  a_{k}d_{\text{max}} + \frac{n_{\text{max}}}{\Pd N^{2}}  \triangleq \zeta_{g_k}.
\end{equation}
$\nabla g_k(\pmb{\theta})$ is also  Lipschitz continuous with $L_{\nabla g_k}=a_k L_{\nabla d_k}$ and bound as
\begin{equation}
	||\nabla g_k(\pmb{\theta})|| < \frac{a_{k}\Pd \lambda_{\text{max}}(\mathbf{C}_{d_k})\sqrt{M}}{\sqrt{N}} + \lambda_{\text{max}}(\mathbf{A}_{k}^{T}\tilde{\pmb{\gamma}}_{kk}) \triangleq \zeta_{\nabla g_k}.
\end{equation}

\subsubsection{ $\nabla f_{\xi}(\pmb{\theta})$ is Lipschitz continuous} 
From (\ref{eq:rate:rewrite2}) we have \begin{equation}
v(\pmb{\theta})^2 \ge \bar{P}^2_{\mathrm{fix}},
\end{equation}
which is due to the fact that the  second  term  in  (\ref{eq:rate:rewrite2})  is  always greater than or equal to zero. Then, following the same method to find the Lipschitz constant of $\nabla u_k(\theta)$ in  Section \ref{sec_nuk}, we can prove that $\nabla f_{\xi}(\pmb{\theta})$ is Lipschitz continuous with a Lipschitz constant given by 
\begin{align}\label{eq_lf1}
L_{f_{\xi}} = L_{1} + L_{2} + L_{3},
\end{align}
where 
$L_{1} = \frac{B\sum_{k=1}^{K} L_{\nabla u_{k}}}{\bar{P}^2_{\mathrm{fix}}}$; $L_2 = \frac{B\left((\sum_{k=1}^{K} L_{\nabla u_{k}})(\sum_{k=1}^{K} \zeta_{\nabla u_{k}}) + L_{v}\zeta_{v}\right)}{\bar{P}^4_{\mathrm{fix}}} $; and $L_{3} = 2\xi \sum_{k=1}^{K} (\zeta_{g_k}L_{g_k} + \zeta_{\nabla g_k}L_{\nabla g_k})$.

\subsection{Proof of Proposition \ref{lem:grad}}\label{sec:grad_proof}

Recall that the PF is
\begin{equation}
\Psi_{k} (\pmb{\theta}) = \left[\text{max}\left(0,  { g}_{k}(\pmb{\theta})\right )\right]^2,
\end{equation}
then
\begin{equation}
\nabla \Psi_{k} (\pmb{\theta}) = \begin{cases}
0, &  g_{k}(\pmb{\theta}) \le 0 \\
2 { g}_{k}(\pmb{\theta})\nabla g_{k}(\pmb{\theta}), &  g_{k}(\pmb{\theta}) >  0,
\end{cases}
\end{equation}
which can be rewritten as
\begin{equation}
\nabla \Psi_{k} (\pmb{\theta}) = 2  \left[\text{max}\left(0, { g}_{k}(\pmb{\theta}) \right )\right] \nabla g_{k}(\pmb{\theta}).
\end{equation}
Using the quotient rule we can write $\nabla f(\pmb{\theta})$ as
\begin{align}
\nabla f_{\xi}(\pmb{\theta})= B\frac{v(\pmb{\theta})\nabla u(\pmb{\theta})-u(\pmb{\theta})\nabla v(\pmb{\theta})}{v(\pmb{\theta})^{2}} - \xi\sum_{k=1}^{K} \nabla \Psi_{k} (\pmb{\theta}),
\end{align}
where 
\begin{equation}
\nabla v(\pmb{\theta})=\Pd N_{0}N\bigl[\frac{2}{\alpha_{1}}\pmb{\theta}_{1};\frac{2}{\alpha_{2}}\pmb{\theta}_{2};\ldots,\frac{2}{\alpha_{M}}\pmb{\theta}_{M}\bigr],
\end{equation}
and 
\begin{equation}
\nabla u(\pmb{\theta})= \sum_{k=1}^{K} \nabla u_{k}(\pmb{\theta}).
\end{equation}
To find the gradient of $\nabla g_k(\pmb{\theta})$ we recall the following
equalities
\begin{equation}
\nabla\bigl(\tilde{\pmb{\gamma}}_{k'k}^{T}\mathbf{A}_{k'}\pmb{\theta}\bigr)^{2}=2\mathbf{A}_{k'}^{T}\tilde{\pmb{\gamma}}_{k'k}\tilde{\pmb{\gamma}}_{k'k}^{T}\mathbf{A}_{k'}\pmb{\theta},
\end{equation}
\begin{equation}
\nabla\bigl(||\tilde{\pmb{\kappa}}_{k}\odot(\mathbf{A}_{k'}\pmb{\theta})||^{2}\bigr)=2\mathbf{A}_{k'}^{T}\mathbf{B}_{k}\mathbf{A}_{k'}\pmb{\theta},
\end{equation}
where $\mathbf{B}_{k}\in R_{+}^{M\times M}$ is a diagonal matrix
whose $m$-th element is $[\mathbf{B}_{k}]_{m}=\beta_{mk}$. Then, by applying the chain rule, we can easily compute the gradient of $\nabla g_k(\pmb{\theta})$ as shown in (\ref{eq_gk}). To find the gradient of $\nabla u_k(\pmb{\theta})$, we first apply the chain rule together with the quotient rule, we have
\begin{align}\label{eq:grad_uk1}
\nabla u_k(\pmb{\theta}) &= \frac{d(\pmb{\theta})}{\text{ln}2\left(n(\pmb{\theta}) + d(\pmb{\theta})\right)} \frac{d(\pmb{\theta}) \nabla n(\pmb{\theta}) - n(\pmb{\theta}) \nabla d(\pmb{\theta})}{d(\pmb{\theta})^2} \notag\\
&= \frac{1}{\text{ln}2\left(n(\pmb{\theta}) + d(\pmb{\theta})\right)}\left(\nabla n(\pmb{\theta}) - \frac{n(\pmb{\theta}) \nabla d(\pmb{\theta})}{d(\pmb{\theta})}\right),
\end{align}

where
\begin{equation}
n(\pmb{\theta}) \triangleq \Pd N^{2}\bigl(\tilde{\pmb{\gamma}}_{kk}^{T}\mathbf{A}_{k}\pmb{\theta}\bigr)^{2},
\end{equation}

\begin{equation}
d(\pmb{\theta}) \triangleq \Pd N^{2}\sum\limits _{k'\neq k}^{K}\bigl(\tilde{\pmb{\gamma}}_{k'k}^{T}\mathbf{A}_{k'}\pmb{\theta}\bigr)^{2}+\Pd N\sum\limits _{k'=1}^{K}||\tilde{\pmb{\kappa}}_{k}\odot(\mathbf{A}_{k'}\pmb{\theta})||^{2}+1,
\end{equation}

\begin{align}\label{eq:f1}
\nabla n(\pmb{\theta}) = 2\Pd N^{2}\mathbf{A}_{k}^{T}\tilde{\pmb{\gamma}}_{kk}\tilde{\pmb{\gamma}}_{kk}^{T}\mathbf{A}_{k}\pmb{\theta},
\end{align}
and 
\begin{align}\label{eq:f2}
\nabla d(\pmb{\theta}) = 2\Pd N^{2}\sum_{k'\ne k}^{K}\mathbf{A}_{k'}^{T}\tilde{\pmb{\gamma}}_{k'k}\tilde{\pmb{\gamma}}_{k'k}^{T}\mathbf{A}_{k'}\pmb{\theta} + 2\Pd N\sum_{k'=1}^{K}\mathbf{A}_{k'}^{T}\mathbf{B}_{k}\mathbf{A}_{k'}\pmb{\theta}.
\end{align}
The substitution of (\ref{eq:f1}) and (\ref{eq:f2}) into (\ref{eq:grad_uk1}) yields (\ref{eq:grad:SE}).

\subsection{Proof of Proposition \ref{lem:proj}}\label{sec_proj}

We now show that the projection onto $\mathcal{C}$ admits an \emph{analytical
	solution} and is \emph{parallelizable}. Recall that $P_{\mathcal{C}}(\mathbf{u})$
is explicitly written as \begin{subequations}\label{eq:proj:main}
	\begin{align}
	\underset{\pmb{\theta}\in\mathbb{R}^{MK}}{\min} & \quad||\pmb{\theta}-\mathbf{u}||^{2}\\
	\st & \quad||\pmb{\theta}_{m}||^{2}\leq\frac{1}{N},m=1,2,\ldots,M\\
	& \quad\pmb{\theta}\geq 0.
	\end{align}
\end{subequations} Note that the objective in \eqref{eq:proj:main} is
separable with $\pmb{\theta}_{m}$. Thus \eqref{eq:proj:main} boils
down to solving the following subproblem for each $m$ \begin{subequations}\label{eq:proj:sub}
	\begin{align}
	\underset{\pmb{\theta}_{m}\in\mathbb{R}^{K}}{\min} & \quad||\pmb{\theta}_{m}-\mathbf{u}_{m}||^{2}\\
	\st & \quad||\pmb{\theta}_{m}||^{2}\leq\frac{1}{N}\\
	& \quad\pmb{\theta}_{m}\geq 0.
	\end{align}
\end{subequations} Problem \eqref{eq:proj:sub} is actually the projection
onto the intersection of an Euclidean ball and the positive orthant. Finally, the result (\ref{eq:anal_sol}) is a direct application of \cite[Theorem 7.1]{Bauschke:2018:Proj}.

\subsection{Convergence Proof of Algorithm \ref{alg:APG}}\label{Appe1}
The convergence proof is divided into two parts. In the first part of the proof we show that, for a given $\xi_{m}$, the APG iterations converge to a stationary solution to the penalized problem \eqref{eq:opt_theta}. In the second part we show that $\{\pmb{\theta}_{\xi}\}$ converges to a feasible point of $(\mathcal{P}_2)$. Thus the convergent point of Algorithm  \ref{alg:APG} is indeed a stationary point of $(\mathcal{P}_2)$.

We begin with the first part of the proof  by recalling an important inequality of a $L_{f}$-Lipschitz continuous gradient
function. Specifically, for a function $f(\mathbf{x})$ has a Lipschitz continuous
gradient with a constant $L_f$, the following inequality holds
\begin{equation}
f(\mathbf{y})\geq f(\mathbf{x})+\bigl \langle{\nabla_{\mathbf{x}}}f(\mathbf{x}),\mathbf{y}-\mathbf{x}\bigr\rangle-\frac{L_{f}}{2}||\mathbf{y}-\mathbf{x}||^{2}.\label{eq:Lips:grad:inequality}
\end{equation}
The projection in \eqref{eq:updatev} is equivalent to \begin{subequations}
\begin{align}\label{eq_cv0}
\mathbf{v}^{(n+1)}&=\underset{\pmb{\theta}\in \mathcal{C}}{\operatorname{argmin}}\ || \pmb{\theta}-\pmb{\theta}^{(n)} -\alpha_{\theta} \nabla f_{\xi}(\pmb{\theta}^{(n)} ||^2\\
&=\underset{\pmb{\theta}\in \mathcal{C}}{\operatorname{argmax}}\ \langle \nabla f_{\xi}(\pmb{\theta}^{(n)},\pmb{\theta}-\pmb{\theta}^{(n)}\rangle - \frac{1}{2\alpha_{\theta}}||\pmb{\theta}-\pmb{\theta}^{(n)}||^2 \label{eq:updatev:rewrite}
\end{align}
\end{subequations}
where $\langle \mathbf{x},\mathbf{y}\rangle=\mathbf{x}^{T}\mathbf{y}$ is the inner product of $ \mathbf{x} $ and $ \mathbf{y} $ and we have used the fact that $||\mathbf{a}-\mathbf{b}||^{2}=||\mathbf{a}||^{2}+||\mathbf{b}||^{2}-2\langle \mathbf{a},\mathbf{b}\rangle$. Note that when $\pmb{\theta}=\pmb{\theta}^{(n)}$, the objective in \eqref{eq:updatev:rewrite} is $ 0 $ and $\mathbf{v}^{(n+1)}$  is the optimal solution to  \eqref{eq:updatev:rewrite}. Thus the following inequality is obvious
\begin{equation}
\langle \nabla f_{\xi}(\pmb{\theta}^{(n)},\mathbf{v}^{(n+1)}-\pmb{\theta}^{(n)}\rangle - \frac{1}{2\alpha_{\theta}}||\mathbf{v}^{(n+1)}-\pmb{\theta}^{(n)}||^2 \geq 0 \label{eq:update:ineqn}
\end{equation}
Combining \eqref{eq:Lips:grad:inequality} and \eqref{eq:update:ineqn}  we obtain
\begin{align}\label{eq_cv1}
f_{\xi}\big(\mathbf{v}^{(n+1)}\big) &\geq f_{\xi}\big(\pmb{\theta}^{(n)}\big)+\big\langle\nabla f_{\xi}\big(\pmb{\theta}^{(n)}\big), \mathbf{v}^{(n+1)}-\pmb{\theta}^{(n)}\big\rangle  -\frac{L}{2}\big\|\mathbf{v}^{(n+1)}-\pmb{\theta}^{(n)}\big\|^{2} \nonumber\\
&\geq f_{\xi}\big(\pmb{\theta}^{(n)}\big)+\Big(\frac{1}{2 \alpha_{\theta}}-\frac{L}{2}\Big)\big\|\mathbf{v}^{(n+1)}-\pmb{\theta}^{(n)}\big\|^{2}.
\end{align}
It is easy to see that if $\alpha_{\theta}<\frac{1}{L_{f}}$, then $f_{\xi}\big(\mathbf{v}^{(n+1)}\big)\geq f_{\xi}\big(\pmb{\theta}^{(n)}\big)$.
Next, if $f_{\xi}\big(\mathbf{z}^{(n+1)}\big) \geq f_{\xi}\big(\mathbf{v}^{(n+1)}\big)$, then $\pmb{\theta}^{(n+1)}=\mathbf{z}^{(n+1)}$, and
\begin{align}\label{eq_cv2}
f_{\xi}\big(\pmb{\theta}^{(n+1)}\big)=f_{\xi}\big(\mathbf{z}^{(n+1)}\big) \geq f_{\xi}\big(\mathbf{v}_{k+1}\big).
\end{align}
If $f_{\xi}\big(\mathbf{z}^{(n+1)}\big) < f_{\xi}\big(\mathbf{v}^{(n+1)}\big)$, then $\pmb{\theta}^{(n+1)}=\mathbf{v}^{(n+1)}$, and
\begin{align}\label{eq_cv3} 
f_{\xi}\big(\pmb{\theta}^{(n+1)}\big)=f_{\xi}\big(\mathbf{v}^{(n+1)}\big)\geq f_{\xi}\big(\pmb{\theta}^{(n)}\big).
\end{align}
In summary we have shown that
\begin{align}
f_{\xi}\big(\pmb{\theta}^{(n+1)}\big) \geq f_{\xi}\big(\mathbf{v}^{(n+1)}\big) \geq  f_{\xi}\big(\pmb{\theta}^{(n)}\big).
\end{align}
Since $ \mathcal{C} $ is compact convex, $\{\pmb{\theta}^{(n)}\}$ and  $\{\mathbf{v}^{(n)}\}$ are bounded. Thus  $\{\pmb{\theta}^{(n)}\}$ has accumulation
points.

In the second part of the proof, we now assert that any accumulation point is a stationary solution of \eqref{eq:opt_theta}.
 As $f_{\xi}(\pmb{\theta}^{(n)})$ is non-decreasing, the objective at all the accumulation points is the same which is denoted by $ f_{\xi}^{\ast}$. Then, from (\ref{eq_cv1}) we have
\begin{align}
\Big(\frac{1}{2 \alpha_{\theta}}-\frac{L}{2}\Big)\Big\|\mathbf{v}^{(n+1)}-\pmb{\theta}^{(n)}\Big\|^{2} \leq f_{\xi}\big(\mathbf{v}^{(n+1)}\big)-f_{\xi}\big(\pmb{\theta}^{(n)}\big) \leq f_{\xi}\big(\pmb{\theta}^{(n+1)}\big) -f_{\xi}\big(\pmb{\theta}^{(n)}\big) .
\end{align}
Summing over $n=1,2, \cdots, \infty$, we have
\begin{align}
\Big(\frac{1}{2 \alpha_{\theta}}-\frac{L}{2}\Big) \sum_{n=1}^{\infty}\big\|\mathbf{v}^{(n+1)}-\pmb{\theta}^{(n)}\big\|^{2} \leq f_{\xi}^{\ast}-f_{\xi}\big(\pmb{\theta}^{(1)}\big)<\infty.
\end{align}
Since  $\alpha_{\mathbf{x}}<\frac{1}{L}$, we can conclude that 
\begin{align} \label{eq_cv_a}
\mathbf{v}^{(n+1)} \rightarrow \pmb{\theta}^{(n)} \quad \text { as } \quad n \rightarrow \infty. 
\end{align}
The optimality condition of (\ref{eq:updatev:rewrite}) results in
\begin{align}
\big\langle \frac{1}{\alpha_{\theta}}\big(\mathbf{v}^{(n+1)}-\pmb{\theta}^{(n)}\big)-\nabla_{\pmb{\theta}}f_{\xi}(\pmb{\theta}^{(n)} ),\pmb{\theta} -\pmb{\theta}^{(n)} \big\rangle &\leq 0, \quad\forall \pmb{\theta} \in \mathcal{C}. \label{eq:optcond}
\end{align}
Let $ \pmb{\theta}^{\ast}$  be any accumulation point of $\{\pmb{\theta}^{(n)}\}$, i.e. $\{\pmb{\theta}^{(n_j)}\} \rightarrow \pmb{\theta}^{\ast}$ as $j \rightarrow \infty$. From \eqref{eq_cv_a} we immediately have that $  \{\pmb{\theta}^{(n_{j}+1)}\} \rightarrow \pmb{\theta}^{\ast} $. We also note that $\nabla_{\pmb{\theta}} f_{\xi}(\pmb{\theta})$ is continuous, and thus $ \nabla_{\pmb{\theta}} f_{\xi}(\pmb{\theta}^{(n_j)}) \rightarrow \nabla_{\pmb{\theta}} f_{\xi}(\pmb{\theta}^{\ast}) $. By letting $ j\to \infty $ in \eqref{eq:optcond}, we have
$\big\langle \nabla_{\pmb{\theta}}f_{\xi}(\pmb{\theta}^{\ast} ),\pmb{\theta} -\pmb{\theta}^{\ast} \big\rangle \geq 0, ~ \forall \pmb{\theta} \in \mathcal{C}.$ 
This inequality simply means that $ \pmb{\theta}^{\ast}  $ is a stationary solution to \eqref{eq:opt_theta} which completes the first part of the proof.

Now we show that $\{\pmb{\theta}_{\xi_{m}}\}$ indeed converges to a feasible point of $(\mathcal{P}_{2})$. Note that for small
$\xi_{m}$, $\pmb{\theta}_{\xi_{m}}$ may not be feasible to $(\mathcal{P}_{2})$,
and that the following inequalities always hold for lager $\xi_{m}$
\begin{equation}
f(\boldsymbol{\theta}_{\xi_{m}})\geq f_{\xi_{m}}(\boldsymbol{\theta}_{\xi_{m}})=\underset{\boldsymbol{\theta}\in\mathcal{C}}{\max}\ f_{\xi_{m}}(\boldsymbol{\theta})\geq f^{\ast},\label{eq:1stinequality}
\end{equation}
where $f^{\ast}$ is the optimal objective of $(\mathcal{P}_{2})$.
In the above, the first equality is due to the negativity of the penalty
term and the second inequality is true for two reasons. First, the problem $\underset{\boldsymbol{\theta}\in\mathcal{C}}{\max}\ f_{\xi_{m}}(\boldsymbol{\theta})$ becomes a convex problem for  large $\xi_{m}$ since $f_{\xi_{m}}(\boldsymbol{\theta})$ becomes concave. Second, the APG method can find the optimal solution since the problem is now convex. Thus the second inequality in \eqref{eq:1stinequality} holds because the optimal objective is no less than the objective at any feasible solution. 

Let us consider a sequence $\xi_{m}\to\infty$. Since the sequence
$\boldsymbol{\theta}_{\xi_{m}}$ belongs to compact set, it has a
convergent subsequence (i.e. the Bolzano-Weierstrass theorem). Thus
we can assume without loss of optimality that $\boldsymbol{\theta}_{\xi_{m}}$
converges to a certain point $\boldsymbol{\theta}^{\ast}$ by abuse of notation. We will
show that $\boldsymbol{\theta}^{\ast}$ 
is indeed 
feasible to $(\mathcal{P}_{2})$. First note that since
$\boldsymbol{\theta}_{\xi_{m}}\to\boldsymbol{\theta}^{\ast}$ and
thus $f(\boldsymbol{\theta}_{\xi_{m}})\to f(\boldsymbol{\theta}^{\ast})$
due to the continuity of $f$. Then from (\ref{eq:1stinequality})
we have 
$f(\boldsymbol{\theta}^{\ast})\geq f^{\ast}.$
Suppose to the contrary that $\boldsymbol{\theta}^{\ast}$ is infeasible.
Since $\boldsymbol{\theta}_{\xi_{m}}\to\boldsymbol{\theta}^{\ast}$
and $\Psi_{k}(\cdot)$ is continuous, for sufficiently large $m$
we have 
\begin{equation}
\Psi_{k}(\boldsymbol{\theta}_{\xi_{m}})\geq\Psi_{k}(\boldsymbol{\theta}^{\ast})>0,k=1,2,\ldots,K.
\end{equation}
Thus for these $k$ we would have 
\begin{equation}
f_{\xi_{m}}(\boldsymbol{\theta}_{\xi_{m}})=f(\boldsymbol{\theta}_{\xi_{m}})-\xi_{m}\Psi_{k}(\boldsymbol{\theta}_{\xi_{m}})\leq f(\boldsymbol{\theta}_{\xi_{m}})-\xi_{m}\Psi_{k}(\boldsymbol{\theta}^{\ast}).
\end{equation}
Thus it is easy to see that $f_{\xi_{m}}(\boldsymbol{\theta}_{\xi_{m}})\to-\infty$
when $\xi_{m}\to\infty$, which contradicts (\ref{eq:1stinequality})
and thus completes the proof.

\end{document}